\newcommand{\nc}{\newcommand}
\newcommand{\DMO}{\DeclareMathOperator}
\nc{\MS}{\mathcal{S}}
\nc{\MR}{\mathcal{R}}
\nc{\cM}{\mathcal{M}}
\nc{\MZ}{\mathcal{Z}}
\DMO{\Binom}{Binom}
\newcommand{\E}{\mathbb{E}}
\DMO{\Var}{Var}
\newcommand{\N}{\mathbb{N}}
\nc{\BN}{\mathbb{N}}
\nc{\BZ}{\mathbb{Z}}
\newcommand{\eps}{\varepsilon}
\nc{\ep}{\eps}
\newcommand{\cA}{\mathcal{A}}
\DeclareMathOperator{\svt}{IterativeSVT}
\DeclareMathOperator{\abt}{AboveThreshold}
\DeclareMathOperator{\pabt}{PermutedAboveThreshold}
\DeclareMathOperator{\Lap}{Lap}
\DeclareMathOperator{\poly}{poly}
\renewcommand{\varepsilon}{\epsilon}
\let\Pr=\relax
\DeclareMathOperator*{\Pr}{\mathbf{Pr}}
\newcommand{\bound}[3]{\mathsf{err}_{#1, #2, #3}}
\newcommand{\sbound}{\bound{k}{\epsilon}{\delta}}
\newcommand{\SV}{\mathtt{sv}}
\newcommand{\good}{\mathrm{good}}
\newcommand{\bad}{\mathrm{bad}}
\nc{\rgp}{\mathrm{RP}}
\newtheorem{theorem}{Theorem}
\newtheorem*{theorem*}{Theorem}
\newtheorem{observation}[theorem]{Observation}
\newtheorem{lemma}[theorem]{Lemma}
\newtheorem*{lemma*}{Lemma}
\newtheorem*{obs*}{Observation}
\newtheorem{definition}[theorem]{Definition}
\newcommand{\badih}[1]{\ifnum\Comments=1\textcolor{red}{[Badih: #1]}\fi}
\newcommand{\pasin}[1]{\ifnum\Comments=1\textcolor{red}{[Pasin: #1]}\fi}
\newcommand{\ravi}[1]{\ifnum\Comments=1\textcolor{cyan}{[Ravi: #1]}\fi}
\title{On Avoiding the Union Bound When Answering \\ Multiple Differentially Private Queries}
\date{\today}
\author{
  Badih Ghazi\thanks{Google Research, Mountain View, CA. Email: \texttt{badihghazi@gmail.com}}
  \and
  Ravi Kumar\thanks{Google Research, Mountain View, CA. Email: \texttt{ravi.k53@gmail.com}}
  \and
  Pasin Manurangsi\thanks{Google Research, Mountain View, CA. Email: \texttt{pasin@google.com}}
}
\begin{document}

\maketitle

\begin{abstract}
In this work, we study the problem of answering $k$ queries with $(\eps, \delta)$-differential privacy, where each query has sensitivity one. We give an algorithm for this task that achieves an expected $\ell_\infty$ error bound of $O(\frac{1}{\epsilon}\sqrt{k \log \frac{1}{\delta}})$, which is known to be tight~\cite{SteinkeU16}. 

A very recent work by Dagan and Kur~\cite{DK20} provides a similar result, albeit via a completely different approach. One difference between our work and theirs is that our guarantee holds even when $\delta < 2^{-\Omega(k/(\log k)^8)}$ whereas theirs does not apply in this case.
On the other hand, the algorithm of~\cite{DK20} has a remarkable advantage that the $\ell_{\infty}$ error bound of $O(\frac{1}{\epsilon}\sqrt{k \log \frac{1}{\delta}})$ holds not only in expectation but always (i.e., with probability one) while we can only get a high probability (or expected) guarantee on the error.

\end{abstract}

\section{Introduction}

One of the most ubiquitous---as well as one of the first---differentially private (DP) algorithm is the Laplace mechanism~\cite{DworkMNS06} where, to answer some query $q$ on a sensitive dataset $X$, we simply compute the true answer $q(X)$ and then add to it a noise term sampled from the Laplace distribution, where the parameter of the distribution is calibrated to the desired privacy level and the sensitivity of $q$. For $\eps$-DP and when the query $q$ has sensitivity at most one, this algorithm yields an expected error of $O(\frac{1}{\eps})$, which is known to be tight \cite{ghosh2012universally}.

In real-world applications, however, it is rarely the case that only a single query is performed on the dataset $X$. A more realistic scenario is when we are given multiple queries $q_1, \dots, q_k$ to the dataset and are asked to compute private answers $a_1, \dots, a_k$ to these queries. While there are several measures of error that can be used, one of the most common is the $\ell_\infty$ error (aka maximum error), which is defined as $\max_{i \in [k]} |q_i(X) - a_i|$.

When the Laplace mechanism is applied in this multiple query setting, the privacy budget has to be split over the $k$ queries, i.e., each query has a budget of $\frac{\eps}{k}$. However, this does \emph{not} result in an $\ell_\infty$ error of $O(\frac{k}{\eps})$ because one has to apply a union bound over all the $k$ queries, which instead results in the expected $\ell_\infty$ error of $O(\frac{k \log k}{\epsilon})$. Remarkably, Steinke and Ullman~\cite{SteinkeU16} showed that this bound is \emph{not} tight, by giving an algorithm with expected $\ell_\infty$ error of $O(\frac{k}{\eps})$. In other words, their algorithm ``avoids the union bound'' in the error. Furthermore, this error is known to be asymptotically tight for $\eps$-DP~\cite{HardtT10}.


For $(\eps, \delta)$-DP algorithms \cite{DworkKMMN06}, the situation is more complicated.  For simplicity, throughout the paper, let 
\[
\sbound := \frac{1}{\epsilon} \sqrt{k \log \frac{1}{\delta}}.
\]
It is known that the expected $\ell_\infty$ error must be at least $\Omega(\sbound)$ for any\footnote{Note that the lower bound on $\delta$ is necessary, as the $\eps$-DP algorithm mentioned in the previous paragraph already yields an $O(\frac{k}{\eps})$ expected $\ell_\infty$ error.} $k^{-O(1)} \geq \delta \geq 2^{-\Omega(k/\eps)}$~\cite{SteinkeU16}. However, the Laplace mechanism, together with the advanced composition theorem~\cite{DworkRV10}, only gives a bound of $O(\sbound \cdot \log k)$, where the $\log k$ factor once again comes from applying the union bound over all $k$ coordinates. The Gaussian mechanism (see e.g.,~\cite{DworkR14}) gives an improved bound of $O(\sbound \cdot \sqrt{\log k})$ due to a better tail behavior of the noise distribution. Steinke and Ullman~\cite{SteinkeU16} once again showed that this is not optimal, by giving an algorithm with expected $\ell_\infty$ error of only $O(\sbound \cdot 
\sqrt{\log \log k})$. This has recently been improved by Ganesh and Zhao~\cite{GZ20} to $O(\sbound \cdot 
\sqrt{\log \log \log k}))$. Even more recently, Dagan and Kur~\cite{DK20} reduce this expected $\ell_\infty$ error to the optimal $O(\sbound)$ although their algorithm only works when $\delta$ is at least $2^{-\Omega(k/(\log k)^8)}$, thereby leaving open the question in the case $2^{-\Omega(k)} \leq \delta \leq 2^{-\Omega(k/(\log k)^8)}$.

\subsection{Our Contributions}

In this work, we resolve the question of~\cite{SteinkeU16} by presenting an $(\eps, \delta)$-DP algorithm with tight expected $\ell_\infty$ error for any $0.5 \geq \delta > 0$, including the regime $2^{-\Omega(k)} \leq \delta \leq 2^{-\Omega(k/(\log k)^8)}$ not covered by~\cite{DK20}.  Our main theorem is the following.

\begin{theorem} \label{thm:main}
For any $k \in \N, \eps \in (0, 1]$ and $\delta \in (0, 0.5]$, there exists an $(\eps, \delta)$-DP algorithm that can answer $k$ queries, each of sensitivity at most one, with expected $\ell_\infty$ error $O(\sbound)$.
\end{theorem}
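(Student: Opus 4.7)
My plan is to case-split on the magnitude of $\delta$, invoking an existing optimal algorithm whenever possible and designing a new one for the remaining regime. When $\delta \geq 2^{-c k/(\log k)^8}$ for a suitable constant $c$, the Dagan--Kur algorithm~\cite{DK20} already achieves expected $\ell_\infty$ error $O(\sbound)$, so I would invoke it as a black box. When $\delta \leq 2^{-c'k}$ for a suitable constant $c'$, we have $\sbound = \Omega(k/\epsilon)$, and the pure $\epsilon$-DP algorithm of Steinke--Ullman~\cite{SteinkeU16}, which achieves $\ell_\infty$ error $O(k/\epsilon)$, already matches. The substantive case is the ``gap'' regime $\delta \in [2^{-c'k},\, 2^{-ck/(\log k)^8}]$ not covered by~\cite{DK20}, and this is where the new technical work lies.

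In the gap, we have $\log(1/\delta) \in [\Omega(k/(\log k)^8),\, O(k)]$ and correspondingly $\sbound \in [\Omega(k/((\log k)^4 \epsilon)),\, O(k/\epsilon)]$. The approach I would take is a hybrid construction: split the privacy budget $(\epsilon, \delta)$ across two stages. Stage one adds to every query an independent sample from a carefully chosen distribution (e.g., Laplace, Gaussian, or a Negative-Binomial-like variable) of scale $\Theta(\sbound)$, so that the marginal error per query is $O(\sbound)$. Stage two runs a Sparse Vector Technique (SVT)-style primitive---mirroring the \textsc{IterativeSVT} and \textsc{PermutedAboveThreshold} operators indicated in the preamble---to detect and correct those few queries whose first-stage noise magnitude exceeds the target threshold $\Theta(\sbound)$.

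For privacy, I would allocate $(\epsilon/2, \delta/2)$ to each stage and combine via basic composition; the first stage follows from standard noise calibration, and the second follows from the standard SVT privacy analysis. For error, I would argue that (i) with high probability, stage one yields error $O(\sbound)$ on all but $O(1)$ queries, by a concentration argument on the noise distribution's tail; (ii) the SVT correctly flags these $O(1)$ outlier queries without leaking privacy; and (iii) the re-noising of the flagged queries brings their error down to $O(\sbound)$ as well. Combining these gives the expected $\ell_\infty$ error $O(\sbound)$.

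The main obstacle, I expect, is choosing the initial noise distribution and the SVT threshold so that three constraints are met simultaneously: the expected maximum over $k$ i.i.d.\ samples can be driven to $O(\sbound)$ after only $O(1)$ re-noising corrections (whereas plain Laplace or Gaussian gives a $\log k$ or $\sqrt{\log k}$ overshoot that must be shaved); the entire pipeline is $(\epsilon, \delta)$-DP for $\delta$ as small as $2^{-\Omega(k)}$, without the SVT's own noise contributing polylog factors to the privacy accounting; and the re-noising of the $O(1)$ bad queries does not reintroduce error beyond $O(\sbound)$. Threading this needle in the very small $\delta$ regime, where $(\epsilon,\delta)$-DP is ``almost pure DP'' but not quite, is the crux of the new construction.
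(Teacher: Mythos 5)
Your case split at the extremes is sound: for $\delta \leq 2^{-c'k}$ the pure-DP Steinke--Ullman bound $O(k/\eps)$ indeed matches $O(\sbound)$, and for $\delta \geq 2^{-ck/(\log k)^8}$ one may cite~\cite{DK20}. But the entire content of the theorem lives in your ``gap'' regime, and there your proposal has a genuine gap at step (i). If stage one adds i.i.d.\ noise calibrated to $(\eps/2,\delta/2)$-DP over $k$ sensitivity-one queries, the per-query noise scale must be $\Theta(\sbound)$; for Laplace or Gaussian noise the number of coordinates exceeding any threshold $C\cdot\sbound$ with $C=O(1)$ is then $\Theta(k)$ in expectation, not $O(1)$. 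Getting down to $O(1)$ (or even $O(k/\log^2 k)$, which is what Theorem~\ref{thm:sv-correction} actually needs) forces the threshold up to $\Omega(\sbound\cdot\sqrt{\log k})$ or $\Omega(\sbound\cdot\sqrt{\log\log k})$ --- this is precisely why \cite{SteinkeU16} lose a $\sqrt{\log\log k}$ factor. A noise distribution that is both privacy-calibrated and has essentially no tail beyond $O(\sbound)$ is exactly the Dagan--Kur construction, whose analysis is unavailable in your gap regime by assumption. So your plan reduces the hard case to an unproved claim, which you yourself flag as ``the crux'' without supplying the idea that resolves it.

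The paper's resolution is structurally different: it does not try to make the first stage leave few outliers (indeed, Algorithm~\ref{alg:svt-iterative} initializes every answer to $\infty$, so \emph{all} $k$ coordinates start out incorrect). Instead it shows how to run sparse-vector correction in the \emph{dense} regime: when a $\gamma$ fraction of coordinates are bad, a randomly permuted AboveThreshold finds one with constant probability using a threshold of only $O(\frac{1}{\eps}\log\frac{1}{\gamma})$ rather than $O(\frac{1}{\eps}\log k)$ (Lemma~\ref{lem:pabt-utility}). Iterating over $L=\Theta(\log\log k)$ stages with geometrically decreasing targets $m_\ell$ and correspondingly increasing budgets $\eps_\ell$ drives the count of bad coordinates down to $O(k/(\log k)^{10})$, at which point Theorem~\ref{thm:sv-correction} applies. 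One further point your proposal omits: the resulting guarantee is only high-probability, and converting it to an expected $\ell_\infty$ bound requires an extra gadget (the paper runs two independent copies and uses a Gaussian-mechanism estimate of the first copy's error to decide which to output), since a correction-based algorithm can have unbounded error on its failure event.
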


\paragraph{Differences from~\cite{DK20}.} 
We stress that the techniques used in our work and~\cite{DK20} are completely different. Specifically, Dagan and Kur~\cite{DK20} arrived at their result by designing a new noise distribution and analyze the algorithm that adds such independent noise to each query's answer. On the other hand, our approach, which is detailed in the next section, is based on the \emph{sparse vector technique}~\cite{DworkNRRV09,HardtR10,RothR10,DworkNPR10} similar to that of~\cite{SteinkeU16}.

In terms of the guarantees, we reiterate that our algorithm works for any $\delta \leq 0.5$, whereas the current analysis of the algorithm of~\cite{DK20} does not apply for $\delta \leq 2^{-\Omega(k/(\log k)^8)}$. On the other hand, the algorithm of~\cite{DK20} has a remarkable advantage that the $\ell_{\infty}$ error bound of $O(\sbound)$ holds not only in expectation but always (i.e., with probability one). In contrast, we can only get a high probability guarantee that the $\ell_{\infty}$ error does not exceed this bound (see Theorem~\ref{thm:high-prob-error}).

\subsection{Proof Overview}
In this section, we describe the high-level technical ideas of our algorithm. We will sometimes be informal here, but all the details will be formalized in subsequent sections.

Our algorithm is inspired by the work of Steinke and Ullman~\cite{SteinkeU16}. Their algorithm works by first adding Gaussian noise to the queries. Then, they use the so-called \emph{sparse vector technique}~\cite{DworkNRRV09,HardtR10,RothR10,DworkNPR10} to ``correct'' the answers that are too far away from the true answers. The procedure they employed in this correction step is encapsulated in the following theorem; its proof can be found, e.g., in~\cite{DworkR14}\footnote{See also~\cite[Theorem 18]{GZ20} for a more detailed explanation.}.

\begin{theorem} \label{thm:sv-correction}
For every $k \geq 1, c_{\SV} \leq k, \eps_{\SV}, \delta_{\SV}, \beta_{\SV} > 0$, and
\begin{align*}
\alpha_{\SV} \geq O\left(\bound{c_\SV}{\eps_\SV}{\delta_\SV}
\cdot \log \frac{k}{\beta_{\SV}} \right),
\end{align*}
there exists an $(\eps_{\SV}, \delta_{\SV})$-DP algorithm that takes as input queries $g_1, \dots, g_k$ each of sensitivity one and if there are at most $c_{\SV}$ indices $i \in [k]$ such that $|g_i(X)| > \alpha_{\SV}/2$, then, with probability at least $1 - \beta_{\SV}$, it answers all the queries with $\ell_\infty$ error is no more than $\alpha_{\SV}$.
\end{theorem}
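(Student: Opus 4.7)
I would follow the classical two-stage sparse-vector strategy. First, run a sparse-vector / above-threshold procedure with public threshold $\alpha_{\SV}/2$ on the sensitivity-one queries $|g_1|,\dots,|g_k|$ (or, if one prefers, on the $2k$ queries $g_i,-g_i$), configured to halt after identifying at most $c_{\SV}$ indices as ``above''; by the hypothesis on the input there are at most $c_{\SV}$ truly-large coordinates, so the stopping rule will not fire prematurely on good events. Second, for each flagged index release a noisy answer via the Gaussian mechanism, and for each unflagged index simply output $0$. Allocate half of the privacy budget to each stage and compose.

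The privacy analysis is essentially off-the-shelf. The $c_{\SV}$-cutoff sparse-vector procedure instantiated with Gaussian noise (equivalently, Laplace noise together with advanced composition across the at most $c_{\SV}$ positive answers) is $(\eps_{\SV}/2,\delta_{\SV}/2)$-DP, and the answering stage runs the Gaussian mechanism on at most $c_{\SV}$ queries with per-query noise scale $O(\sqrt{c_{\SV}\log(1/\delta_{\SV})}/\eps_{\SV})$, which by advanced composition is $(\eps_{\SV}/2,\delta_{\SV}/2)$-DP. Basic composition yields the desired $(\eps_{\SV},\delta_{\SV})$-DP guarantee.

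For accuracy I need two good events to occur simultaneously. The sparse-vector accuracy theorem gives that, with probability at least $1-\beta_{\SV}/2$, every above/below decision is correct up to an additive slack of $O(\bound{c_{\SV}}{\eps_{\SV}}{\delta_{\SV}}\cdot \log(k/\beta_{\SV}))$; by the hypothesized lower bound on $\alpha_{\SV}$ this slack is at most $\alpha_{\SV}/2$, so every index with $|g_i(X)|>\alpha_{\SV}/2$ is flagged (hence the cutoff is never hit by legitimate large coordinates) and every unflagged index has $|g_i(X)|\le \alpha_{\SV}$, making $0$ a valid answer there. A union bound over the at most $c_{\SV}$ Gaussian noises in the second stage shows that with probability $1-\beta_{\SV}/2$ each noisy answer on a flagged index has magnitude $O(\bound{c_{\SV}}{\eps_{\SV}}{\delta_{\SV}}\cdot \sqrt{\log(c_{\SV}/\beta_{\SV})})\le \alpha_{\SV}$. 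A final union bound on the two events yields the claimed accuracy.

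The main obstacle is the sparse-vector accuracy claim under $(\eps,\delta)$-DP rather than pure DP: namely, that one can drive the internal noise down from the pure-DP scale $O(c_{\SV}/\eps_{\SV})$ to $O(\sqrt{c_{\SV}\log(1/\delta_{\SV})}/\eps_{\SV})$ while still processing all $k$ queries with a cutoff of $c_{\SV}$ positive answers. This is a now-standard strengthening of the textbook SVT analysis obtained by substituting Gaussian noise for Laplace noise (or by an advanced-composition accounting across the $c_{\SV}$ ``above'' events); the extra $\log(k/\beta_{\SV})$ factor in the error bound collects the union bound over all $k$ noisy comparisons to the threshold, and everything else is bookkeeping.
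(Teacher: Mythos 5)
The paper does not actually prove Theorem~\ref{thm:sv-correction}; it imports it from prior work (see \cite{DworkR14} and \cite[Theorem 18]{GZ20}). Your reconstruction is essentially the standard argument behind those references: a $c_{\SV}$-cutoff sparse-vector pass to flag the large coordinates (privacy via advanced composition over the at most $c_{\SV}$ above-threshold segments, so the internal noise scale is $O(\sqrt{c_{\SV}\log(1/\delta_{\SV})}/\eps_{\SV})$; accuracy via a union bound over all $k$ noisy comparisons, which is where the $\log(k/\beta_{\SV})$ factor comes from), followed by noisy answers on the flagged coordinates and $0$ elsewhere. One detail needs fixing: with the threshold placed exactly at $\alpha_{\SV}/2$, the good event does \emph{not} give you that every index with $|g_i(X)|>\alpha_{\SV}/2$ is flagged, and coordinates with $|g_i(X)|$ just below $\alpha_{\SV}/2$ can themselves be flagged and exhaust the cutoff prematurely, after which a genuinely huge coordinate could be answered $0$ with unbounded error. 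The standard remedy is to put the threshold strictly inside the slack margins, e.g.\ at $3\alpha_{\SV}/4$ with slack at most $\alpha_{\SV}/4$: then every flagged index has $|g_i(X)|>\alpha_{\SV}/2$, so at most $c_{\SV}$ flags ever occur and the cutoff can only be reached after all truly large coordinates are flagged, while every unflagged examined index has $|g_i(X)|\le \alpha_{\SV}$. With that adjustment (and the Laplace-plus-advanced-composition instantiation you mention, which is the safe one, since substituting Gaussian noise directly into AboveThreshold requires a separate analysis), your argument is complete.
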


Note here that $g_i$ should be thought of as the difference between the true answer $q_i$ and the estimate output in the first step. The above algorithm can be used to ``correct'' the $g_i$'s that are too large, if there are not too many of them. Specifically, notice that the error $\alpha_{\SV}$ is $O(\sbound)$ only when the number of ``very incorrect'' answers $c_{\SV}$ is at most $O(k / \log^2 k)$. This is indeed the reason why~\cite{SteinkeU16} achieve an error of $O(\sbound \cdot \sqrt{\log\log k})$, as they need to ensure (using the tail bound for Gaussian noise) that at most $O(k / \log^2 k)$ coordinates are ``very incorrect''.

This brings us to the main technical question we explore in this work: can we still apply the correction procedure when $\omega(k/\log^2 k)$ coordinates are ``very incorrect''? How about even at $\Omega(k)$? In other words, can we apply a sparse vector-based correction in the \emph{dense} regime?

In a specific sense, we show that this is possible, by carefully applying the sparse vector technique iteratively and ensuring that (with high probability) some progress is made each time.

To be more specific, we have to understand how the $\poly\log(k)$ factor appears in the first place. Roughly speaking, the main primitive used in Theorem~\ref{thm:sv-correction} is the following $\abt$ algorithm, which allows us to identify a single ``incorrect'' coordinate.

\begin{algorithm}[!htp]
\begin{algorithmic}[1]
  \Procedure{$\abt^\eps_T$}{$g_1, g_2,  \dots, g_k$; $X$}
  \State $\rho \leftarrow \Lap(2/\eps)$
  \For{$i = 1, \dots, k$}
  \State $\nu_i \leftarrow \Lap(4/\eps)$
  \If{$g_i(X)  + \nu_i \geq T + \rho$}
  \State \Return $i$
  \EndIf
  \EndFor
  \EndProcedure
 \end{algorithmic}
\caption{AboveThreshold Algorithm.}\label{alg:abt}
\end{algorithm}
Its privacy guarantee is well-known (see, e.g., \cite[Theorem 3.23]{DworkR14}):

\begin{theorem} \label{thm:abt-privacy}
If each of $g_1, g_2, \dots, g_k$ has sensitivity at most one, Algorithm~\ref{alg:abt} is $\eps$-DP.
\end{theorem}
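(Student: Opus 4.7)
The plan is to prove Theorem~\ref{thm:abt-privacy} by the standard density-ratio (``shift'') argument for the sparse vector technique. Fix two neighboring datasets $X, X'$ and a candidate output $i^* \in [k] \cup \{\bot\}$, where $\bot$ denotes the event that the loop never triggers a return. It suffices to exhibit a map, on the underlying noise space, from realizations that produce output $i^*$ on $X$ to realizations that produce output $i^*$ on $X'$, and to verify that the Radon--Nikodym derivative of the induced measure against the original noise measure is bounded by $e^\eps$. I would first condition on the values of $\nu_i$ for all $i \neq i^*$, so that ``output equals $i^*$'' becomes the deterministic event on the two remaining variables $(\rho, \nu_{i^*})$ described by $g_i(X) + \nu_i < T + \rho$ for every $i < i^*$ together with $g_{i^*}(X) + \nu_{i^*} \geq T + \rho$.

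The key trick is the change of variables $\rho \mapsto \rho + 1$ and $\nu_{i^*} \mapsto \nu_{i^*} + 1$. Using sensitivity one, $|g_i(X) - g_i(X')| \leq 1$, one checks that the ``no stop before $i^*$'' inequalities $g_i(X') + \nu_i < T + (\rho + 1)$ follow from their $X$-counterparts via $g_i(X') \leq g_i(X) + 1$, while the ``stop at $i^*$'' inequality $g_{i^*}(X') + (\nu_{i^*} + 1) \geq T + (\rho + 1)$ follows from the $X$-counterpart via $g_{i^*}(X') \geq g_{i^*}(X) - 1$. Hence the event on $X$ embeds injectively into the event on $X'$. Since a Laplace variable with scale $b$ satisfies $f(x)/f(x+1) \leq e^{1/b}$, the shift in $\rho \sim \Lap(2/\eps)$ contributes a density factor of $e^{\eps/2}$ and the shift in $\nu_{i^*} \sim \Lap(4/\eps)$ contributes $e^{\eps/4}$, for a combined ratio of at most $e^{3\eps/4} \leq e^\eps$. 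Integrating out the conditioned variables $\{\nu_i\}_{i \neq i^*}$ then yields the desired bound on $\Pr[\abt^\eps_T(X) = i^*] / \Pr[\abt^\eps_T(X') = i^*]$.

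For the case $i^* = \bot$ I would instead shift only $\rho \mapsto \rho + 1$ and leave all $\nu_i$ fixed; the ``never stop'' event under $X$ is then contained in the ``never stop'' event under $X'$ at a density cost of just $e^{\eps/2}$. The main obstacle, such as it is, is purely bookkeeping: choosing the correct signs of the shifts and verifying that the Laplace scales $2/\eps$ and $4/\eps$ combine to give exactly $e^\eps$ in the worst case—if either scale were chosen smaller, the ratio would exceed $e^\eps$, and one also has to be mindful that the sensitivity-one assumption is used in both directions (to control $g_i(X')$ from above for $i < i^*$ and from below for $i = i^*$).
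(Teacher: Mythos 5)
The paper does not prove this theorem itself---it cites the standard AboveThreshold analysis (e.g., Theorem~3.23 of Dwork--Roth)---so your proposal has to stand on its own, and it contains a genuine error in the key step. Your shift $\nu_{i^*} \mapsto \nu_{i^*}+1$ is too small: from $g_{i^*}(X) + \nu_{i^*} \geq T + \rho$ and $g_{i^*}(X') \geq g_{i^*}(X) - 1$ you only get $g_{i^*}(X') + (\nu_{i^*}+1) \geq T + \rho$, whereas after shifting $\rho \mapsto \rho + 1$ you need $\geq T + \rho + 1$; the implication fails by exactly $1$. The correct shift is $\nu_{i^*} \mapsto \nu_{i^*} + 2$: it must absorb both the unit increase in the threshold $T + \rho$ (needed so that the ``no stop before $i^*$'' inequalities carry over) and the unit decrease $g_{i^*}(X') \geq g_{i^*}(X) - 1$. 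With that correction the density cost is $e^{1/(2/\eps)} \cdot e^{2/(4/\eps)} = e^{\eps/2} \cdot e^{\eps/2} = e^{\eps}$, which is tight---this is precisely why the scales are $2/\eps$ and $4/\eps$. Your claimed ratio of $e^{3\eps/4}$ with apparent slack should have been a warning sign, especially since you yourself note that smaller scales would break the bound: if the argument really only used a total ratio of $e^{3\eps/4}$, one could rescale the noise to get $(3\eps/4)$-DP for free, which is not the case.

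The rest of the structure is sound: conditioning on $\{\nu_i\}_{i \neq i^*}$, using sensitivity one in both directions (upper-bounding $g_i(X')$ for $i < i^*$, lower-bounding $g_{i^*}(X')$), translation invariance of Lebesgue measure for the change of variables, and handling the no-output event $\bot$ by shifting only $\rho$ at cost $e^{\eps/2}$. Once you replace the $\nu_{i^*}$ shift by $2$, the argument goes through and matches the standard proof the paper is implicitly invoking.
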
 

It turns out that the $\log k$ factor in $\alpha_{\SV}$ comes from a rather extreme situation: suppose that $g_1, \dots, g_{k/2}$ are the ``correct'' coordinates, e.g., $g_1 = \cdots = g_{k/2} = 0$. To make sure that we do not output these coordinates we have to  make sure that \emph{all of} $\nu_1, \dots, \nu_{k/2}$ are smaller than the threshold. Thus, the threshold has to be at least $\Omega(\log k)$. We end by noting that the $\sqrt{c_{\SV} \log \frac{1}{\delta_{\SV}}}$ factor in $\alpha_{\SV}$ then shows up because of $c_{\SV}$-fold advanced composition (see Theorem~\ref{thm:adv-composition}).

Now, the above example is extreme. In fact, suppose that there is a $\gamma$ fraction of the coordinates that we would like to correct. If we randomly permute the coordinates, at least one of these coordinates will appear, with a constant probability, in the first $1/\gamma$ coordinates. As a result, we only have to ensure that $\rho, \nu_1, \dots, \nu_{1/\gamma}$ are small, meaning that we should be able to get away with a threshold of $\log \frac{1}{\gamma}$ instead of $\log k$.

Our algorithm formalizes this idea. Specifically, it works in stages. In stage $\ell$, we have a target number $m_{\ell}$ of items we would like to ``correct''. This number will be (slowly) geometrically decreasing. The $\eps$'s for (the permuted version of) the $\abt$ algorithm in each stage on the other hand geometrically increase, but slower than $m_\ell$ so that the entire algorithm in the end remains $(\eps, \delta)$-DP.

The actual analysis is more involved than the above outline, because some of the ``correction'' operations can also flip a ``correct'' coordinate to an ``incorrect'' one, and as such we have to track this number as well in order to ensure that we make progress. Another technical point is that while we can analyze the algorithm until no ``incorrect'' coordinates remain at all, it turns out to be more complicated as we need finer concentration inequalities.  Instead, we analyze our iterative algorithm until the number of ``incorrect'' coordinates is sufficiently small that we can apply Theorem~\ref{thm:sv-correction}. Finally, even after doing so, it only gives a high probability bound over the $\ell_\infty$ error, so we then devise a simple extension that roughly takes the best of the current output and the output of another application of the Gaussian mechanism, which helps us bound the expected $\ell_\infty$ error, which eventually yields Theorem~\ref{thm:main}.

\paragraph{Organization.}
We provide necessary background in Section~\ref{sec:prelim}. Then, we start by analyzing the ``permuted'' variant of the $\abt$ algorithm in Section~\ref{sec:pabt}. We continue in Section~\ref{sec:isvc} by presenting our iterative correction algorithm and give an upper bound on the number of ``incorrect'' coordinates. In Section~\ref{sec:high-prob-error}, we use this together with Theorem~\ref{thm:sv-correction} to obtain a high-probability $\ell_\infty$ error bound. We then use this to obtain the expected error bound in Section~\ref{sec:expected-error}. Finally, we discuss some open questions in Section~\ref{sec:open}.

\section{Preliminaries}
\label{sec:prelim}

\begin{definition}[Differential Privacy~\cite{DworkMNS06, DworkKMMN06}]
For $\eps, \delta \geq 0$, we say that an algorithm $\cA$ is $(\eps, \delta)$-differentially private (or $(\eps, \delta)$-DP for short) if the following holds for any set $S$ of outputs and any neighboring datasets $X, X'$:
\begin{align*}
\Pr[\cA(X) \in S] \leq e^{\eps} \cdot \Pr[\cA(X') \in S] + \delta.
\end{align*}
When $\delta = 0$, we may simply say that the algorithm is $\eps$-DP.
\end{definition}

The \emph{sensitivity} of a real-valued function $f$ is defined to be $\max_{X, X'} |f(X) - f(X')|$ where the maximum is taken over all pairs of neighboring datasets $X, X'$. For the purpose of this work, it does not matter how the neighboring relationship is defined; as long as we measure the sensitivity and differential privacy under the same neighboring notion, the results hold.

For brevity, we will assume henceforth that any query considered has sensitivity at most one and we may not state this assumption explicitly. Note that all results extend to the case where the sensitivity is bounded by $\Delta$, with the (necessary) multiplicative $\Delta$ factor in the expected error.

We also recall the following composition theorems for DP:

\begin{theorem}[Basic Composition Theorem]
\label{thm:basic-composition}An algorithm that applies a sequence of $(\eps_1, \delta_1)$-DP, $\dots, (\eps_m, \delta_m)$-DP (possibly adaptive) algorithms is $(\eps_1 + \cdots + \eps_m, \delta_1 + \cdots + \delta_m)$-DP.
\end{theorem}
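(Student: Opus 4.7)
The plan is to prove this by induction on $m$, reducing immediately to the two-fold case: if $\cA_1$ is $(\eps_1, \delta_1)$-DP and, for every fixed transcript $y$, the mechanism $\cA_2(\cdot, y)$ is $(\eps_2, \delta_2)$-DP, then the adaptively composed mechanism $\cA(X) := (\cA_1(X), \cA_2(X, \cA_1(X)))$ is $(\eps_1 + \eps_2, \delta_1 + \delta_2)$-DP. Iterating this statement gives the general bound for any $m$.

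To handle $\delta$ cleanly, I would first invoke the standard equivalent formulation of approximate DP: for neighbors $X, X'$, an algorithm $\cB$ is $(\eps, \delta)$-DP iff there exists an ``exceptional'' set $F$ in its output space with $\Pr[\cB(X) \in F] \leq \delta$ such that for every measurable $S$, $\Pr[\cB(X) \in S \setminus F] \leq e^{\eps} \Pr[\cB(X') \in S]$ (and symmetrically). Applying this to $\cA_1$ yields an exceptional set $E_1$ in the range of $\cA_1$ with mass at most $\delta_1$. For each $y$, applying it to $\cA_2(\cdot, y)$ yields an exceptional set $E_2(y)$ of mass at most $\delta_2$. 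I would then define the joint exceptional set $E := \{(y, z) : y \in E_1 \text{ or } z \in E_2(y)\}$, which by a union bound has mass at most $\delta_1 + \delta_2$ under $\cA(X)$.

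The core calculation is then a two-step ``peeling.'' For any set $S$ of output pairs, write $S_y := \{z : (y, z) \in S\}$ and decompose
\begin{align*}
\Pr[\cA(X) \in S \setminus E]
&= \sum_{y \notin E_1} \Pr[\cA_1(X) = y] \cdot \Pr[\cA_2(X, y) \in S_y \setminus E_2(y)] \\
&\leq \sum_{y \notin E_1} \Pr[\cA_1(X) = y] \cdot e^{\eps_2} \Pr[\cA_2(X', y) \in S_y] \\
&\leq e^{\eps_1 + \eps_2} \sum_y \Pr[\cA_1(X') = y] \cdot \Pr[\cA_2(X', y) \in S_y] \\
&= e^{\eps_1 + \eps_2} \Pr[\cA(X') \in S],
\end{align*}
where the first inequality uses the $\cA_2$ guarantee pointwise in $y \notin E_1$ and the second applies the $\cA_1$ guarantee, viewing $y \mapsto \Pr[\cA_2(X', y) \in S_y]$ as defining the set $\{y : \cdot\}$ (more precisely, integrating against $\cA_1(X)$ vs.\ $\cA_1(X')$). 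Combining this with $\Pr[\cA(X) \in E] \leq \delta_1 + \delta_2$ and invoking the characterization in the reverse direction yields the desired $(\eps_1 + \eps_2, \delta_1 + \delta_2)$-DP bound.

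The main obstacle, though a minor one, is the adaptive/measure-theoretic bookkeeping: the exceptional sets $E_2(y)$ must be chosen so that $E$ is jointly measurable, and the second inequality above must be justified by an integration argument rather than a naive application of DP to a fixed set. Both are handled by taking $E_2(y)$ to be the canonical ``high privacy-loss'' set and by replacing the discrete sums by integrals in the continuous case; the argument is otherwise routine. The base case $m = 1$ is immediate from the definition, so induction completes the proof.
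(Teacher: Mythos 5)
The paper states this composition theorem as standard background and gives no proof of its own, so your argument must stand on its own merits. It does not quite: the ``standard equivalent formulation'' you invoke is not actually an equivalence, and the direction you need is false. It is true that the existence of a set $F$ with $\Pr[\cB(X)\in F]\le\delta$ and $\Pr[\cB(X)\in S\setminus F]\le e^{\eps}\Pr[\cB(X')\in S]$ for all $S$ \emph{implies} $(\eps,\delta)$-DP. But $(\eps,\delta)$-DP does not imply the existence of such a deterministic $F$. Any $F$ satisfying the second condition must contain (up to null sets) the canonical high-loss set $\{r: p_X(r)>e^{\eps}p_{X'}(r)\}$, and that set can have mass under $\cB(X)$ far exceeding $\delta$: take $p_X(0)=0.9$, $p_{X'}(0)=0.9\,e^{-\eps}/(1+\eta)$ for tiny $\eta$; the pair is $(\eps,\delta)$-indistinguishable with $\delta\approx 0.9\eta$, yet the high-loss set $\{0\}$ has mass $0.9$. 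What you are implicitly assuming is ``probabilistic/pointwise DP,'' which is strictly stronger than $(\eps,\delta)$-DP; the converse holds only with degraded parameters. So the main obstacle is not measurability bookkeeping, as you suggest, but that the lemma your peeling rests on is false for $E_1$ (and for each $E_2(y)$).

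Two standard repairs exist. (1) Replace the deterministic exceptional set by the mixture/coupling characterization: $\cB$ is $(\eps,\delta)$-DP iff one can write $\cB(X)\sim(1-\delta)P'+\delta P''$ and $\cB(X')\sim(1-\delta)Q'+\delta Q''$ with $P',Q'$ being $(\eps,0)$-indistinguishable; composing these couplings step by step, all stages land in their pure-DP components except with probability $1-\prod_i(1-\delta_i)\le\sum_i\delta_i$, and pure DP composes additively in $\eps$. (2) Keep your peeling order but avoid exceptional sets entirely via truncation: writing $g(y)=\Pr[\cA_2(X,y)\in S_y]$ and $g'(y)=\Pr[\cA_2(X',y)\in S_y]$, one has $g(y)\le\min(1,e^{\eps_2}g'(y))+\delta_2$, and for any $[0,1]$-valued $h$ the identity $\E[h(\cA_1(X))]=\int_0^1\Pr[h(\cA_1(X))>t]\,dt$ lets you apply the $(\eps_1,\delta_1)$ guarantee levelwise to get $\E[h(\cA_1(X))]\le e^{\eps_1}\E[h(\cA_1(X'))]+\delta_1$; chaining these gives exactly $e^{\eps_1+\eps_2}\Pr[\cA(X')\in S]+\delta_1+\delta_2$. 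Either route makes your induction sound; as written, the proof has a genuine gap.
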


\begin{theorem}[Advanced Composition Theorem~\cite{DworkRV10}] \label{thm:adv-composition}
An algorithm that applies an $\eps$-DP (possibly adaptive) algorithm $m$ times is $(\eps', \delta')$-DP for any $\delta' > 0$ with
\begin{align*}
\eps' = \sqrt{2m\log \frac{1}{\delta'}} \cdot \eps + m\eps(e^{\eps} - 1).
\end{align*}
\end{theorem}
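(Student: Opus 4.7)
The plan follows the approach of \cite{DworkRV10}: reduce $(\eps', \delta')$-DP to a tail bound on the privacy loss random variable, bound each per-step privacy loss, and then apply Azuma--Hoeffding to control the adaptive sum. The overall template is ``bounded privacy loss with high probability implies approximate DP.''

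First, I will use the standard privacy loss reduction. For a mechanism $\cA$ and neighboring datasets $X, X'$, define the privacy loss at output $y$ by $L(y) := \log \frac{\Pr[\cA(X) = y]}{\Pr[\cA(X') = y]}$. The key lemma is: if $\Pr_{y \sim \cA(X)}[L(y) > \eps'] \leq \delta'$ (and symmetrically with the roles of $X, X'$ swapped), then $\cA$ is $(\eps', \delta')$-DP. The proof is immediate: for any output set $S$, split $S$ into the ``good'' subset on which $L(y) \leq \eps'$ (whose mass under $\cA(X)$ is at most $e^{\eps'} \Pr[\cA(X') \in S]$) and its complement (whose mass is at most $\delta'$ by assumption).

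Next, for each of the $m$ pure $\eps$-DP mechanisms in the composition, the per-step privacy loss $L_i$ at the $i$-th output satisfies $|L_i| \leq \eps$ pointwise, and its conditional expectation $\E[L_i \mid y_1, \dots, y_{i-1}]$ is at most $\eps(e^\eps - 1)$ (this follows from $\log x \leq x - 1$ applied to the ratio of probabilities, which in turn is bounded by $e^\eps$). Both bounds continue to hold in the adaptive setting because conditioning on the past outputs leaves us with a specific $\eps$-DP mechanism at step $i$. Summing the conditional expectations across all $m$ steps gives the $m\eps(e^\eps-1)$ additive term in the final bound.

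Finally, I apply concentration. The sequence $M_i := L_i - \E[L_i \mid y_1, \dots, y_{i-1}]$ is a martingale difference sequence with $|M_i|$ bounded by a constant times $\eps$, so Azuma--Hoeffding gives $\Pr\bigl[\sum_i M_i > t\bigr] \leq \exp\bigl(-\Omega(t^2/(m\eps^2))\bigr)$. Choosing $t$ of order $\eps\sqrt{2m \log (1/\delta')}$ makes the failure probability at most $\delta'$, and combining with the per-step expectation bound yields $\sum_i L_i \leq m\eps(e^\eps - 1) + \eps\sqrt{2m\log(1/\delta')}$ with probability at least $1 - \delta'$; invoking the reduction from the first paragraph on both $(X, X')$ and $(X', X)$ then gives $(\eps', \delta')$-DP.

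The main obstacle is handling adaptivity rigorously: the $i$-th mechanism is itself a measurable function of the previous outputs, so one must construct a coupling of the two composed runs on $X$ and $X'$ over the product of the individual output spaces in order to view the privacy losses as a single stochastic process on a common probability space and define the filtration against which $M_i$ is a martingale difference. Once this coupling is set up carefully---and one checks that the per-step bounds on $|L_i|$ and $\E[L_i \mid \mathrm{past}]$ survive conditioning on the past outputs---the remainder of the argument is a clean application of martingale concentration plus the privacy-loss reduction.
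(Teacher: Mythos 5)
The paper does not prove this statement; it is imported verbatim as a known preliminary from~\cite{DworkRV10}, so there is no in-paper proof to compare against. Your proposal correctly reconstructs the standard argument from that reference: the privacy-loss tail-bound reduction, the per-step bounds $|L_i| \leq \eps$ and $\E[L_i \mid \mathrm{past}] \leq \eps(e^\eps - 1)$, and Azuma--Hoeffding on the martingale differences. The only point worth flagging is that to recover the exact constant $\sqrt{2m\log\frac{1}{\delta'}}$ (rather than a larger constant) one must apply the interval-width form of Azuma--Hoeffding, using that each $L_i$ lies in an interval of length $2\eps$, rather than the cruder symmetric bound $|M_i| \leq c$ with $c \approx 2\eps$; since you wrote the exponent with an $\Omega(\cdot)$, this is a matter of tightening constants, not a gap in the argument.
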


\section{Permuted AboveThreshold Algorithm}
\label{sec:pabt}

We start by analyzing the variant of the $\abt$ algorithm where the coordinates are randomly permuted, as presented in Algorithm~\ref{alg:pabt}. 

\begin{algorithm}[!htp]
\begin{algorithmic}[1]
  \Procedure{$\pabt^\eps_T$}{$f_1, f_2,  \dots, f_k$; $X$}
  \State $\pi \leftarrow$ random permutation on $[k]$
  \State $i \leftarrow \abt^\eps_T(f_{\pi(1)}, f_{\pi(2)}, \dots, f_{\pi(k)}; X)$ \label{step:abt-call}
  \State \Return $\pi^{-1}(i)$
  \EndProcedure
 \end{algorithmic}
\caption{Permuted AboveThreshold Algorithm.}\label{alg:pabt}
\end{algorithm}

Since the only step in Algorithm~\ref{alg:pabt} that depends on the input dataset is Step~\ref{step:abt-call}, we can apply the privacy guarantee of $\abt$ (from Theorem~\ref{thm:abt-privacy}) to arrive at the following similar guarantee for $\pabt$.

\begin{observation} \label{obs:pabt-privacy}
If each of $f_1, f_2, \ldots, f_k$ has sensitivity at most one, Algorithm~\ref{alg:pabt} is $\eps$-DP.
\end{observation}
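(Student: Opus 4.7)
The plan is to reduce the privacy analysis to that of $\abt$ (Theorem~\ref{thm:abt-privacy}) by conditioning on the random permutation, which is drawn independently of the dataset $X$. Concretely, I would first fix an arbitrary permutation $\pi$ on $[k]$ and observe that the queries $f_{\pi(1)}, \dots, f_{\pi(k)}$ are simply a reindexing of $f_1, \dots, f_k$; in particular, each $f_{\pi(j)}$ has sensitivity at most one. Thus, conditional on $\pi$, Step~\ref{step:abt-call} invokes $\abt^\eps_T$ on $k$ sensitivity-one queries, which by Theorem~\ref{thm:abt-privacy} is an $\eps$-DP algorithm with respect to $X$.

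Next, I would note that the final output $\pi^{-1}(i)$ is obtained by applying the data-independent (given $\pi$) map $i \mapsto \pi^{-1}(i)$ to the output of $\abt$. Since differential privacy is closed under post-processing, the conditional distribution of $\pi^{-1}(i)$ given $\pi$ satisfies the $\eps$-DP guarantee with respect to $X$ as well: for any neighboring datasets $X, X'$ and any set $S \subseteq [k]$,
\begin{align*}
\Pr[\pi^{-1}(i) \in S \mid \pi] \le e^{\eps} \cdot \Pr[\pi^{-1}(i) \in S \mid \pi]_{X \to X'}.
\end{align*}

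Finally, since $\pi$ is sampled independently of $X$, I would integrate this conditional bound over the distribution of $\pi$: the law of total probability gives $\Pr[\pabt(\cdots;X) \in S] = \E_\pi[\Pr[\pi^{-1}(i) \in S \mid \pi]]$, and applying the pointwise inequality inside the expectation yields $\Pr[\pabt(\cdots;X) \in S] \le e^{\eps} \cdot \Pr[\pabt(\cdots;X') \in S]$, establishing $\eps$-DP. There is no real obstacle here; the only subtlety to be careful about is to make explicit that $\pi$ is sampled from the same distribution regardless of whether the input is $X$ or $X'$, so that mixing over $\pi$ preserves the DP guarantee (equivalently, one may view the permutation as part of the algorithm's internal randomness that can be coupled across $X$ and $X'$).
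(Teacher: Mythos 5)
Your proposal is correct and follows the same route as the paper: the paper's justification is precisely that the only data-dependent step is the call to $\abt$, so its $\eps$-DP guarantee (Theorem~\ref{thm:abt-privacy}) carries over; you have simply spelled out the standard details (conditioning on the data-independent permutation, post-processing by $\pi^{-1}$, and averaging over $\pi$) that the paper leaves implicit.
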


Next, we analyze its utility. Let $i^*$ be the output index of $\pabt^\eps_T$. Our goal here is to ensure that, with some non-trivial constant probability, $f_{i^*}(X) \geq T + w$ for some parameter $w > 0$. Similar to the known analyses of the vanilla $\abt$ algorithm, we will have to assume that there are not too many ``bad'' coordinates $i$ with $f_i(X) \in (T - w, T + w)$. The difference in our analysis below is that we additionally assume that there are many, i.e., $\gamma \cdot k$ ``good'' coordinates $i$ that satisfy $f_i(X) \geq T + w$. This turns out to help us reduce the $w$ parameter; specifically, we get $w$ to be as small as $O(\frac{1}{\epsilon} \log \frac{1}{\gamma})$, comparing to the vanilla analysis that would have required $w$ to be at least $O(\frac{1}{\epsilon} \log k)$.

\begin{lemma} \label{lem:pabt-utility}
Let $\gamma, w > 0$ be any real numbers. Define $I_{\good} := \{i \in [k] \mid f_i(X) \geq T + w\}$ and $I_{\bad} := \{i \in [k] \mid f_i(X) \in (T - w, T + w)\}$. 
Suppose that the following conditions all hold:
\begin{itemize}
\item[(i)] $|I_{\good}| \geq \gamma \cdot k$.
\item[(ii)] $w \geq 8 \cdot \frac{1}{\epsilon} \cdot \log \frac{400}{\gamma}$.
\item[(iii)] $|I_{\good}| \geq 2 \cdot |I_{\bad}|$.
\end{itemize}
Then,
$\Pr_{i^*}[i^* \in I_\good] \geq 0.6$.
\end{lemma}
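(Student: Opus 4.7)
The plan is to exploit the random permutation so that, with constant probability, the first index whose image lies in $I_{\good}\cup I_{\bad}$ is in $I_\good$, and moreover that coordinate triggers before any preceding safe coordinate does. Write $G := |I_\good|$, $B := |I_\bad|$, and let $I_{\mathrm{safe}} := [k]\setminus(I_\good\cup I_\bad)$ (so $f_i(X)\le T-w$ for $i\in I_{\mathrm{safe}}$). Let $P := \min\{j : \pi(j)\in I_\good\cup I_\bad\}$, and consider the events $E_1 = \{\pi(P)\in I_\good\}$, $E_2 = \{\text{no }\pi(j)\text{ with }j<P\text{ triggers}\}$, $E_3 = \{\pi(P)\text{ triggers}\}$. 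If $E_1\cap E_2\cap E_3$ occurs, the first triggered coordinate is $\pi(P)\in I_\good$, so $i^*\in I_\good$.

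First I would condition on $\{|\rho|\le w/2\}$, which by the Laplace tail bound and (ii) has complement probability at most $e^{-\eps w/4}\le(\gamma/400)^2$. Under this conditioning, for any good $i$ the trigger threshold $T+\rho-f_i(X)$ is $\le -w/2$, so using $\nu_i\sim\Lap(4/\eps)$ and (ii),
\[
\Pr[i\text{ triggers}] \;\ge\; \Pr[\nu_i\ge -w/2] \;\ge\; 1 - \tfrac{1}{2}e^{-\eps w/8} \;\ge\; 1 - \gamma/800,
\]
and the symmetric computation for a safe $i$ gives a trigger probability $\le \gamma/800$. These Bernoullis are mutually independent once $\rho$ is fixed. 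Conditioning further on $\pi$, the position $P$ is determined, $E_2$ depends only on the noises at the (safe) positions $1,\ldots,P-1$, and $E_3$ depends only on $\nu_P$, so $E_2$ and $E_3$ are conditionally independent and
\[
\Pr[E_1\cap E_2\cap E_3\mid\pi,|\rho|\le w/2] \;\ge\; \ind[E_1]\cdot(1-\gamma/800)^{P-1}(1-\gamma/800) \;\ge\; \ind[E_1]\bigl(1-P\gamma/800\bigr).
\]

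To finish I use symmetry. Under the uniform random permutation, $\Pr[E_1]=G/(G+B)\ge 2/3$ from (iii); and $P$ depends only on which $\pi$-positions land in $I_\good\cup I_\bad$, independently of which of those are good versus bad, so $\E[\ind[E_1]\,P]=\Pr[E_1]\,\E[P]$. The standard first-success identity gives $\E[P]=(k+1)/(G+B+1)$, and (i) (i.e., $G\ge\gamma k$) yields $\gamma\E[P]\le 2$ in all cases (both for $\gamma k\ge 1$ and for $\gamma k < 1$). Taking expectation over $\pi$ and reabsorbing the conditioning on $\rho$,
\[
\Pr[i^*\in I_\good] \;\ge\; \Pr[E_1]\bigl(1-\gamma\E[P]/800\bigr)\bigl(1-(\gamma/400)^2\bigr) \;\ge\; \tfrac{2}{3}(1-1/400)(1-1/160000) \;\ge\; 0.6.
\]

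The main subtlety, I expect, will be the conditional-independence bookkeeping: $P$ must be recognized as a function of $\pi$ only so that the factorization above is valid, and the independence of $P$ from $E_1$ (coming from the symmetry between good and bad indices under a uniform permutation) is what justifies replacing $\E[\ind[E_1]P]$ by $\Pr[E_1]\,\E[P]$. Beyond that the argument is a clean trade-off: (i) forces $\E[P]=O(1/\gamma)$, (ii) forces the safe-trigger probability to be $O(\gamma)$, their product is $O(1)$ so the expected number of spurious safe triggers before position $P$ is a small constant independent of $k$, and (iii) supplies the $2/3$ factor for the first non-safe coordinate being good.
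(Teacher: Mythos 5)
Your proof is correct. It rests on the same core idea as the paper's proof---the random permutation places the first ``non-safe'' coordinate at position $O(1/\gamma)$, so only $O(1/\gamma)$ Laplace noises need to be controlled, which is what lets $w$ scale like $\frac{1}{\eps}\log\frac{1}{\gamma}$ rather than $\frac{1}{\eps}\log k$---but the bookkeeping is genuinely different. The paper truncates at a fixed prefix: it defines the events that the first good index precedes the first bad index, that the first good index lands within the first $5/\gamma$ positions, and that $\rho$ and the first $\lceil 5/\gamma\rceil$ noises are all below $w/2$, then applies a union bound over these three events (getting roughly $2/3 - 0.02 - 0.05$). You instead avoid any truncation: you compute the per-coordinate trigger probabilities ($\geq 1-\gamma/800$ for good, $\leq \gamma/800$ for safe), exploit conditional independence given $(\pi,\rho)$ to get the factor $(1-\gamma/800)^{P}\geq 1-P\gamma/800$, and then use the exchangeability of good versus bad among the non-safe positions together with $\E[P]=(k+1)/(G+B+1)\leq 1/\gamma + 1$ to conclude. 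Your route requires slightly more care (the independence of $\ind[E_1]$ and $P$, and the conditional-independence factorization, both of which you justify correctly), but it yields a cleaner and slightly stronger constant ($\approx 0.665$ versus the paper's $\approx 0.6$) and dispenses with the somewhat arbitrary $5/\gamma$ cutoff. Both arguments are valid; the paper's is more elementary, yours is sharper.
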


\begin{proof}
For a permutation $\pi$, let $j_{\good} \in [k]$ denote the smallest index such that $\pi(j_{\good}) \in I_{\good}$. Similarly, let $j_{\bad} \in [k]$ denote the smallest index such that $\pi(j_{\bad}) \in I_{\bad}$. Furthermore, let us define the following three events:
\begin{itemize}
\item Event $E_{\text{before}}$: $j_{\good} < j_{\bad}$.
\item Event $E_{\text{small-index}}$: $j_{\good} \leq 5/\gamma$.
\item Event $E_{\text{small-noise}}$: $|\rho|, |\nu_1|, \dots, |\nu_{\lceil 5/\gamma \rceil}| < w/2$ (where $\rho, \nu_1, \dots, \nu_{\lceil 5/\gamma \rceil}$ are the random variables sampled in the call to Algorithm~\ref{alg:abt}).
\end{itemize}
It is simple to see that, when the three events occur together, we have that $i^* = \pi(j_{\good}) \in I_{\good}$ as desired. Furthermore, we may bound the probability of each event as follows:
\begin{itemize}
\item Event $E_{\text{before}}$: this happens with probability exactly $\frac{|I_{\good}|}{|I_{\good}| + |I_{\bad}|}$, which is at least 2/3 from condition (iii).
\item Event $E_{\text{small-index}}$: the probability that this event does \emph{not} occur is at most
\begin{align*}
\left(1 - \frac{|I_{\good}|}{k}\right)^{\lfloor 5/\gamma\rfloor} \leq \left(1 - \gamma\right)^{4/\gamma} \leq e^{-4} \leq 0.02,
\end{align*}
where the first inequality follows from condition (i).
\item Event $E_{\text{small-noise}}$: Notice that each of $|\rho|, |\nu_1|, \dots, |\nu_{\lceil 5/\gamma \rceil}|$ is at least $w/2$ with probability at most
\begin{align*}
2 \exp\left(-\frac{w/2}{4/\eps}\right) \leq 2\left(\frac{\gamma}{400}\right) =
\frac{\gamma}{200},
\end{align*}
where the first inequality follows from   condition (ii). From a union bound, we have that $E_{\text{small-noise}}$ holds with probability at least 0.95.
\end{itemize}
Applying a union bound over all the three events, we can conclude that they all simultaneously hold with probability at least 0.6. This concludes our proof.
\end{proof}

\section{Iterative Sparse Vector Algorithm}
\label{sec:isvc}

Our iterative version of the correction algorithm via the sparse vector technique is presented in Algorithm~\ref{alg:svt-iterative}. As stated earlier, the algorithm performs the correction in multiple stages. In stage $\ell$, we use $\eps_\ell$ to denote the privacy parameter for each correction, $m_\ell$ to denote the number of corrections made, and $T_\ell$ to denote the threshold used. These parameters will be set below. Before doing so, let us state the guarantee that this algorithm achieves:

\begin{theorem} \label{thm:iterative}
For any $k \in \N$, $\eps \in (0, 1]$ and $\delta \in (0, 0.5]$, there exists an $(\eps, \delta)$-DP algorithm that given queries $q_1, \dots, q_k$ each of sensitivity at most one, outputs $a_1, \dots, a_k$ that satisfy
\begin{align*}
|\{i \in [k] \mid |q_i - a_i| > O(\sbound)\}| \leq O(k / (\log k)^{10})
\end{align*}
with probability $2^{-\Omega(k / (\log k)^{10})}$.
\end{theorem}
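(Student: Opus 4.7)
The plan is a two-phase algorithm. In Phase~1, we apply the Gaussian mechanism with budget $(\eps/2, \delta/2)$ to produce initial estimates $a_i^{(0)}$; the noise per coordinate is independent centered Gaussian with standard deviation $\sigma = \Theta(\sbound)$. For any sufficiently large constant $C$, the number of coordinates with error exceeding $C \cdot \sbound$ is a sum of independent Bernoullis whose mean is a small constant fraction of $k$; a Chernoff bound then gives that this count is at most some $m_0 = \Theta(k)$ except with failure probability $2^{-\Omega(k)}$. These are the ``bad'' coordinates we will iteratively correct in Phase~2.

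Phase~2 performs $L = O(\log \log k)$ correction stages to drive the bad count from $m_0 = \Theta(k)$ down to $m_L = O(k/(\log k)^{10})$, with $m_\ell = m_0/2^\ell$. Each stage $\ell$ executes $M_\ell = \Theta(m_\ell)$ cycles, where a cycle (i) invokes $\pabt^{\eps_\ell^{\SV}}_{T - w_\ell}$ on the residuals $f_i(X) = |q_i(X) - a_i^{\text{curr}}|$ (each of sensitivity one) to obtain an index $i^*$, and (ii) overwrites $a_{i^*}^{\text{curr}}$ by $q_{i^*}(X) + \Lap(1/\eps_\ell^{\text{corr}})$. Here the error threshold $T = \Theta(\sbound)$ is fixed across stages, and the slack $w_\ell$ is chosen as a small constant fraction of $T$; with these choices condition~(ii) of Lemma~\ref{lem:pabt-utility} holds once we set $\eps_\ell^{\SV} = \Theta(\log(k/m_\ell)/w_\ell) = \Theta(\ell/\sbound)$, and $\eps_\ell^{\text{corr}}$ is chosen so that a single Laplace draw lies in $(-w_\ell, w_\ell)$ except with some failure probability $p_\ell$. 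Privacy is accounted via Advanced Composition within each stage: the per-stage cost is $\Theta(\eps_\ell^{\SV}\sqrt{m_\ell \log(1/\delta)}) = \Theta(\ell \cdot 2^{-\ell/2}) \cdot \eps$, and since $\sum_\ell \ell \, 2^{-\ell/2}$ converges, Basic Composition across stages plus Phase~1 yields $(\eps, \delta)$-DP overall.

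For utility, Lemma~\ref{lem:pabt-utility} guarantees that, when its preconditions hold, each $\pabt$ call at stage $\ell$ returns a bad coordinate with probability at least $0.6$, and the subsequent Laplace correction brings its error below $T$ with probability $1 - p_\ell$. Choosing $p_\ell$ small enough, the per-cycle change in the bad count is a $\{-1,0,+1\}$-valued supermartingale with drift at most $-0.5$, and a standard Azuma/Chernoff-type bound shows that after $M_\ell = \Theta(m_\ell)$ cycles the bad count drops to at most $m_{\ell+1}$, except with probability $\exp(-\Omega(m_\ell)) \leq 2^{-\Omega(k/(\log k)^{10})}$. Union-bounding over the $L$ stages and the Phase~1 Chernoff event yields the claimed bound.

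The hard part will be ensuring Lemma~\ref{lem:pabt-utility}'s preconditions hold throughout every cycle---especially condition~(iii), which demands that the clearly-bad count exceed twice the ``borderline'' count of coordinates with current error in the narrow window $(T - 2w_\ell, T)$. This can be violated either because the Gaussian initialization lands too many coordinates in the window (controlled by a Gaussian density bound) or because a streak of Laplace corrections lands there (controlled by $\eps_\ell^{\text{corr}}$ and $p_\ell$). Carrying through the supermartingale argument while simultaneously tracking the bad and borderline populations across cycles---and confirming the preconditions remain valid with high probability throughout each stage---is the central analytic challenge.
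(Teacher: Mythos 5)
Your high-level architecture matches the paper's: iterative correction via $\pabt$ with geometrically shrinking targets $m_\ell$, per-stage privacy budgets $\eps_\ell$ growing slowly enough that advanced composition within a stage plus basic composition across stages stays within $(\eps,\delta)$, and a per-stage Chernoff-type concentration giving failure probability $2^{-\Omega(m_L)} = 2^{-\Omega(k/(\log k)^{10})}$. (The Gaussian initialization in your Phase~1 is harmless but unnecessary; the paper starts from $a_i = \infty$ and lets stage~1 do $m_1 = 0.9k$ corrections.) But the step you defer as ``the central analytic challenge''---maintaining condition~(iii) of Lemma~\ref{lem:pabt-utility}---is precisely the crux of the proof, and the parameters you commit to make it fail. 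With a \emph{fixed} threshold $T = \Theta(\sbound)$ and window width $w_\ell = \Theta(T)$, the borderline population $\{i : f_i \in (T - w_\ell, T + w_\ell)\}$ is $\Theta(k)$ at initialization (a constant-width-in-$\sigma$ window of a Gaussian captures a constant fraction of $k$ coordinates) and the stages only touch $O(m_\ell)$ coordinates each, so by stage $2$ or $3$ the clearly-bad count $\Theta(k/2^\ell)$ is far below twice the borderline count and Lemma~\ref{lem:pabt-utility} is inapplicable. Halving $m_\ell$ each stage is also too aggressive even if the window problem were fixed: the natural bound on the borderline count is the gap between the previous stage's guarantee ($\le 2m_{\ell-1}$ above the old threshold) and the current good count ($\ge 1.999 m_\ell$ above the new one), which is $\approx (2 m_{\ell-1}/m_\ell - 2)m_\ell$; this is a small fraction of $m_\ell$ only when $m_{\ell-1}/m_\ell$ is close to $1$.

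The paper's resolution is a \emph{ladder of increasing thresholds}: $T_\ell = 4(w_1+\cdots+w_{\ell-1}) + 3w_\ell + 2w_{\ell+1}$, so that $T_\ell - w_\ell \ge \tau_{\ell-1} + w_\ell$ and the stage-$\ell$ borderline window sits entirely inside $[\tau_{\ell-1}, \tau_\ell)$. Its population is then the set difference $|I^{\ell-1}_{\ell-1} \setminus I^{\ell-1}_\ell|$, bounded by $\le 0.23 m_\ell$ using the induction hypothesis $|I^{\ell-1}_{\ell-1}| \le 2m_{\ell-1}$ together with the slow decay $m_\ell = 0.9^{\ell} k$ (hence $L = \Theta(\log\log k)$ with a larger constant). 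The widths $w_\ell = 100\log(500k/m_\ell)/\eps_\ell$ decay geometrically because $\eps_\ell$ grows like $\lambda^{-\ell/2}$, so the total threshold $\tau_L \le 4\sum_\ell w_\ell$ still telescopes to $O(\sbound)$. The paper also handles the regime where Lemma~\ref{lem:pabt-utility}'s condition~(i) fails (too few bad coordinates left for $\pabt$ to reliably find one) by a separate easy case: there one only needs the resampling Laplace noise to rarely exceed $w_\ell$, so that at most $0.001 m_\ell$ good coordinates get flipped to bad. Your supermartingale drift argument implicitly assumes the lemma's preconditions hold in every cycle, which they cannot once the bad count drops below $\gamma k$; you would need this same case split.
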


Notice that this guarantee does not give the $\ell_\infty$ error bound yet, as there can still be as many as $O(k/(\log k)^{10})$ coordinates that have error larger than the desired bound of $O(\sbound)$. However, as we will see in the next section, this already suffices for us to apply Theorem~\ref{thm:sv-correction} at the end and get a high probability bound on the $\ell_\infty$ error.

We remark that it is possible to select parameters in such a way that the final application of Theorem~\ref{thm:sv-correction} is not needed, i.e., by adding one more stage that essentially imitates Theorem~\ref{thm:sv-correction}. Nonetheless, since this does not seem to help clarify the analysis, we choose to not include it.

\begin{algorithm}[!htp]
\begin{algorithmic}[1]
  \Procedure{$\svt^{\eps_1, \dots, \eps_L}_{m_1, \dots, m_L, T_1, \dots, T_L}$}{$q_1, \dots, q_k$}
  \State $(a_1, \dots, a_k) \leftarrow (\infty, \dots, \infty)$
  \For{$\ell = 1, \dots, L$}
  \For{$j = 1, \dots, m_\ell$}
  \State $i^* \leftarrow \pabt^{0.5\eps_\ell}_{T_\ell}(|q_1 - a_1|, \dots, |q_k - a_k|)$ \label{step:select-coordinate}
  \State $a_{i^*} \leftarrow q_{i^*}(X) + \Lap(2/\eps_\ell)$ \label{step:resampled}
  \EndFor
  \EndFor
  \State \Return $(a_1, \dots, a_k)$
  \EndProcedure
 \end{algorithmic}
\caption{Iterative Sparse Vector Correction Algorithm.}\label{alg:svt-iterative}
\end{algorithm}

Our selection of parameters is as follows:
\begin{itemize}
\item $\kappa = 0.9$ and $\lambda = 0.95$
\item $L = \lceil10 \log_{1/\kappa} \log k\rceil$
\item $\eps_0 = \frac{\eps}{1000\sqrt{\log(1/\delta)}}$
\end{itemize}
For $\ell \geq 1$, we define
\begin{itemize}
\item $m_{\ell} = \kappa^{\ell} \cdot k$
\item $\eps_\ell = \frac{\eps_0}{\sqrt{k}} \left(\frac{1}{\sqrt{\ell \lambda^{\ell}}}\right)$ 
\item $w_\ell = \frac{100 \log(500 k / m_\ell)}{\eps_\ell}$
\item $T_\ell =  4(w_1 + \dots + w_{\ell-1}) + 3 w_{\ell} + 2 w_{\ell+1}$
\end{itemize}
Finally, we define $T_0 = 2w_1$ and $w_0 = 0$.

Throughout this section, we always assume that the parameters are as specified above and we will not mention this again. We also assume that $m_\ell$ defined above is an integer for every $\ell \in [L]$. This is without loss of generality since we may simply replace $k$ with $k' := 10^{\lceil \log_{10} k\rceil}$ where $q_{k + 1}, \dots, q_{k'}$ are constants; when $k$ is sufficiently large, this ensures that $m_\ell$ is an integer for all $\ell \in [L]$.

The proof of Theorem~\ref{thm:iterative} is broken down into two parts: the privacy proof and the utility proof. 

\subsection{Privacy Analysis}

We will start by proving the privacy guarantee of the algorithm.

\begin{theorem}[Privacy Guarantee]
For $\eps \in (0, 1]$ and $\delta \in (0, 0.5]$, Algorithm~\ref{alg:svt-iterative} is $(\eps, \delta)$-DP. 
\end{theorem}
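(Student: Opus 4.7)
The plan is to decompose the algorithm's privacy cost at two levels: within a stage by advanced composition (Theorem~\ref{thm:adv-composition}), then across stages by basic composition (Theorem~\ref{thm:basic-composition}). I would begin by noting that each inner iteration performs two DP primitives: a call to $\pabt^{0.5\eps_\ell}_{T_\ell}$ on the sensitivity-one queries $|q_i - a_i|$ (which, treating the previously computed $a_i$'s as fixed realizations of earlier adaptive random variables, is $0.5\eps_\ell$-DP by Observation~\ref{obs:pabt-privacy}), followed by releasing $q_{i^*}(X) + \Lap(2/\eps_\ell)$, which is $0.5\eps_\ell$-DP by the standard Laplace mechanism analysis. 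By basic composition each inner iteration is $\eps_\ell$-DP.

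Next, for each stage $\ell$ I would apply Theorem~\ref{thm:adv-composition} to the $m_\ell$ adaptive $\eps_\ell$-DP iterations, with the slack parameter $\delta'_\ell := \delta/L$. Using the standard estimate $\eps_\ell(e^{\eps_\ell}-1) \le 2\eps_\ell^2$ valid for $\eps_\ell \le 1$ (which holds since $\eps_0 \le 1/1000$ and $\eps_\ell \le \eps_0$), this yields that stage $\ell$ is $(\eps'_\ell, \delta/L)$-DP with
\[
\eps'_\ell \;\le\; \sqrt{2 m_\ell \log(L/\delta)}\cdot \eps_\ell \;+\; 2 m_\ell \eps_\ell^2.
\]
The key algebraic observation is that our parameter choices give $m_\ell \eps_\ell^2 = \frac{\eps_0^2}{\ell}(\kappa/\lambda)^\ell$, so $\sqrt{m_\ell}\,\eps_\ell = \frac{\eps_0}{\sqrt{\ell}}(\kappa/\lambda)^{\ell/2}$. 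Since $\kappa/\lambda = 0.9/0.95 < 1$, both $\sum_{\ell=1}^{L}\frac{1}{\sqrt{\ell}}(\kappa/\lambda)^{\ell/2}$ and $\sum_{\ell=1}^{L}\frac{1}{\ell}(\kappa/\lambda)^\ell$ are bounded by an absolute constant $C$.

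Finally, I would apply Theorem~\ref{thm:basic-composition} across the $L$ stages to get that the whole algorithm is $(\sum_\ell \eps'_\ell,\,\delta)$-DP, and bound
\[
\sum_{\ell=1}^{L} \eps'_\ell \;\le\; C\sqrt{2\log(L/\delta)}\cdot \eps_0 \;+\; 2C\,\eps_0^2.
\]
Plugging in $\eps_0 = \eps/(1000\sqrt{\log(1/\delta)})$ and using $\log(L/\delta) \le \log(1/\delta) + \log L \le O(\log(1/\delta))$ (recall $L = O(\log\log k)$ and $\delta \le 1/2$, so $\log L$ is absorbed into a constant multiple of $\log(1/\delta)$), the first term is at most $\eps/2$, while the second term is at most $\eps_0^2 \ll \eps/2$, giving the required bound of $\eps$.

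The main obstacle is bookkeeping rather than any deep technical step: one must verify that the choice of $\delta'_\ell$ produces a summable error even in the regime where $\delta$ is exponentially small in $k$, that $\eps_\ell \le 1$ so the quadratic estimate on $e^{\eps_\ell}-1$ is valid, and that the geometric ratio $\kappa/\lambda < 1$ is what actually lets the sums telescope to a constant. All of these reduce to routine checks once the parameters are substituted, so the proof is essentially the above calculation together with the observation that the only data-dependent steps inside each inner iteration are the two DP primitives identified above.
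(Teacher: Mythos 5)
Your decomposition is exactly the paper's: each inner iteration is $\eps_\ell$-DP (Observation~\ref{obs:pabt-privacy} plus the Laplace mechanism), advanced composition within each stage, basic composition across stages, and the same algebraic identities $\sqrt{m_\ell}\,\eps_\ell = \eps_0(\kappa/\lambda)^{\ell/2}/\sqrt{\ell}$ and $m_\ell\eps_\ell^2 = \eps_0^2(\kappa/\lambda)^{\ell}/\ell$ driving the geometric convergence. The one place you deviate is the per-stage failure budget, and that is where your argument has a leak. You set $\delta'_\ell = \delta/L$ and then assert $\log(L/\delta) \leq O(\log(1/\delta))$ on the grounds that $\log L$ is ``absorbed into a constant multiple of $\log(1/\delta)$.'' That is false in general: $L = \Theta(\log\log k)$ grows with $k$, so $\log L = \Theta(\log\log\log k)$ is unbounded, whereas $\log(1/\delta)$ can be as small as the constant $\log 2$ (e.g., $\delta = 0.5$). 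Hence for fixed constant $\delta$ and $k \to \infty$, the ratio $\sqrt{\log(L/\delta)/\log(1/\delta)}$ is unbounded and your bound $C\sqrt{2\log(L/\delta)}\,\eps_0 \leq \eps/2$ eventually fails. (Admittedly the failure threshold is astronomical, but the theorem is claimed for all $k$.)

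The fix is the paper's choice of split: take $\delta'_\ell = 2^{-\ell}\delta$, so that $\sum_{\ell} \delta'_\ell \leq \delta$ still holds, while $\log(1/\delta'_\ell) = \ell\log 2 + \log(1/\delta)$. The point is that this quantity enters the advanced-composition bound as $\sqrt{m_\ell \log(1/\delta'_\ell)}\,\eps_\ell = \eps_0 (\kappa/\lambda)^{\ell/2}\sqrt{\log(1/\delta'_\ell)/\ell}$, and $\log(1/\delta'_\ell)/\ell = \log 2 + \log(1/\delta)/\ell \leq 2\log(1/\delta)$ uniformly in $\ell$ and $k$ (using $\log(1/\delta) \geq \log 2$). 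In other words, the $1/\sqrt{\ell}$ factor already built into $\eps_\ell$ is there precisely to absorb a per-stage $\delta$-budget that shrinks geometrically in $\ell$; the uniform split $\delta/L$ wastes that factor and leaves a $k$-dependent residue. With this one change, the rest of your calculation goes through verbatim and matches the paper's proof.
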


\begin{proof}
From Observation~\ref{obs:pabt-privacy} and the privacy of the Laplace mechanism, we can conclude that a single execution of Lines~\ref{step:select-coordinate} and~\ref{step:resampled} is $\eps_\ell$-DP. Hence, for a fixed outer iteration $\ell \in [L]$, we may apply advanced composition (Theorem~\ref{thm:adv-composition}) to conclude that it is $(\eps'_\ell, \delta'_\ell)$-DP where $\delta'_\ell = 0.5^{\ell} \cdot \delta$ and
\begin{align*}
\eps'_\ell &= \sqrt{2m_\ell \log(2^\ell / \delta)} \cdot \eps_\ell + m_\ell \eps_{\ell}(e^{\eps_\ell} - 1) \\
&\leq \sqrt{2m_\ell \log(2^\ell / \delta)} \cdot \eps_\ell + 2 m_\ell \eps_{\ell}^2 \\
&= \sqrt{2m_\ell \log(2^\ell / \delta)} \cdot \frac{\eps_0}{\sqrt{k}} \left(\frac{1}{\sqrt{\ell \cdot \lambda^{\ell}}}\right) + 2m_\ell \left(\frac{\eps_0}{\sqrt{k}} \left(\frac{1}{\sqrt{\ell \cdot \lambda^{\ell}}}\right)\right)^2 \\
&\leq \eps_0\left(\sqrt{\frac{2m_\ell}{\lambda^{\ell} k} \cdot \frac{\log(2^\ell / \delta)}{\ell}} + \frac{2m_\ell}{\lambda^{\ell} k} \right) \\
&\leq 4 \eps_0 \left(\sqrt{(\kappa/\lambda)^{\ell} \log(1/\delta)}\right) \\
&\leq \frac{\eps}{200} \cdot (\kappa/\lambda)^{\ell/2}.
\end{align*}

Finally, we apply basic composition (Theorem~\ref{thm:basic-composition}) over all $\ell \in [L]$, which implies that the entire algorithm is $(\eps', \delta')$-DP for
\begin{align*}
\eps' = \sum_{\ell \in [L]} \eps'_\ell = \sum_{\ell \in [L]} \frac{\eps}{200} \cdot (\kappa/\lambda)^{\ell/2} \leq \frac{\eps}{200(1 - (\kappa/\lambda)^{0.5})} \leq \eps,
\end{align*}
and
\begin{align*}
\delta' = \sum_{\ell \in [L]} \delta'_\ell = \sum_{\ell \in [L]} \left(0.5^{\ell} \delta\right) \leq \delta,
\end{align*}
as desired.
\end{proof}

\subsection{Utility Analysis}

We will next prove the utility guarantee, as restated below.  

\begin{theorem}[Utility Guarantee] \label{thm:utility}
With probability at least $1-2^{-\Omega(k / (\log k)^{10})}$, the output $(a_1, \dots, a_k)$ of Algorithm~\ref{alg:svt-iterative} satisfies 
\begin{align*}
|\{i \in [k] \mid |q_i - a_i| > O(\sbound) \}| \leq O(k / (\log k)^{10})
\end{align*}
\end{theorem}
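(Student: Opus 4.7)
My plan is to track, at the start of each stage $\ell$, the count $N_\ell$ of coordinates whose current error $|q_i - a_i|$ exceeds $T_\ell - w_\ell$, and to prove by induction that $N_\ell \leq C\kappa^\ell k$ with high probability for a suitable constant $C$. Setting $\ell = L+1$ then yields $N_{L+1} = O(\kappa^L k) = O(k/(\log k)^{10})$, and since $T_L - w_L = O(\sbound)$ by the geometric convergence of $\sum_\ell w_\ell$ under the chosen $\eps_\ell$, this directly gives the utility guarantee, because coordinates with $|q_i - a_i| > O(\sbound)$ are exactly those counted by $N_{L+1}$.

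The first ingredient is a structural observation about the state of the algorithm. At the start of stage $\ell$, every coordinate $i$ is in one of two states: either $i$ has never been resampled (so $a_i = \infty$), placing $i$ in the ``high'' set $H_\ell := \{i : |q_i - a_i| > T_\ell + w_\ell\}$; or $i$ was resampled at some earlier stage $\ell' < \ell$, in which case, conditional on $|\Lap(2/\eps_{\ell'})| \leq w_{\ell'}/2$ at that step (a very high probability event by the Laplace tail bound), its current error is at most $w_{\ell'}/2 \leq T_\ell - w_\ell$, using the fact that $T_\ell - w_\ell \geq 4\sum_{j<\ell} w_j$. Consequently, the ``middle'' set $M_\ell := \{i : |q_i - a_i| \in (T_\ell - w_\ell, T_\ell + w_\ell]\}$ is empty, so $N_\ell = |H_\ell|$ and condition (iii) of Lemma~\ref{lem:pabt-utility} is automatic. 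Moreover, the parameter choice $w_\ell = 100\log(500k/m_\ell)/\eps_\ell$ combined with the $\pabt$ privacy parameter $0.5\eps_\ell$ makes condition (ii) of that lemma hold for $\gamma = \gamma_\ell := 400(m_\ell/(500k))^{6.25}$, so the lemma applies whenever $|H_\ell| \geq \gamma_\ell k$.

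For the inductive step I would then split into two cases for stage $\ell$. In Case~(A), if $|H_\ell|$ stays at least $\gamma_\ell k$ throughout the $m_\ell$ invocations of $\pabt$, the lemma applies at every call, so each returns an index in $H_\ell$ with probability $\geq 0.6$. A Chernoff bound yields that at least $0.55 m_\ell$ of the calls succeed, except with probability $2^{-\Omega(m_\ell)}$; each successful call moves the selected coordinate from $H_\ell$ into $L_\ell$ (conditional on the Laplace good event), so $N_{\ell+1} \leq N_\ell - 0.55 m_\ell$. Substituting the inductive hypothesis and using $m_\ell = \kappa^\ell k$, the constant $C = 0.55/(1-\kappa) = 5.5$ makes $N_\ell - 0.55 m_\ell \leq C\kappa^{\ell+1} k$, preserving the invariant; the base case $N_1 \leq k \leq C\kappa k$ holds trivially since $C\kappa = 4.95 \geq 1$. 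In Case~(B), if $|H_\ell|$ drops below $\gamma_\ell k$ at some point during the stage, then $N_{\ell+1} \leq \gamma_\ell k$ by the monotonicity of $|H_\ell|$, and since the exponent $6.25$ in $\gamma_\ell$ dominates $1$, we have $\gamma_\ell k \leq C\kappa^{\ell+1} k$ as well, so the invariant is preserved.

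Finally, I would aggregate failure probabilities by a union bound. Per stage $\ell$, the Chernoff failure probability is $2^{-\Omega(m_\ell)}$, and the total Laplace-tail failure probability across the $m_\ell$ resamplings in that stage is at most $m_\ell \cdot (m_\ell/(500k))^{25}$, which is negligible in comparison. Summing over $\ell \in [L]$ and using that $m_\ell$ decreases geometrically, the dominant contribution comes from $\ell = L$, giving total failure probability $2^{-\Omega(m_L)} = 2^{-\Omega(k/(\log k)^{10})}$, as required. The main technical delicacy, in my view, is pinning down Case~(B) cleanly: one must verify that the generous leading constant $100$ in $w_\ell$ makes $\gamma_\ell k$ polynomially smaller than $\kappa^{\ell+1} k$ for every $\ell \in [L]$, so that the inductive invariant survives even when the algorithm ``runs out'' of high-set coordinates mid-stage and the lemma's preconditions break.
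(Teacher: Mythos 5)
Your overall strategy matches the paper's: an induction over stages maintaining an invariant of the form $N_\ell \lesssim \kappa^\ell k$, a two-case split per stage (enough ``high'' coordinates to invoke Lemma~\ref{lem:pabt-utility} vs.\ already few), and a final check that $\sum_\ell w_\ell = O(\sbound)$. However, there is a genuine gap at the heart of your argument: the claim that the ``middle'' set $M_\ell$ is empty. You justify this by conditioning on every resampling noise satisfying $|\Lap(2/\eps_{\ell'})| \leq w_{\ell'}/2$, and you estimate the total failure probability of this conditioning as $\sum_{\ell} m_\ell \cdot (m_\ell/(500k))^{25}$, calling it negligible. It is not: each term equals $\kappa^{\ell} k\cdot(\kappa^{\ell}/500)^{25}$, which is a (tiny) \emph{constant times $k$}, i.e., it grows linearly in $k$ and in particular is nowhere near the required $2^{-\Omega(k/(\log k)^{10})}$. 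For large $k$ the expected number of ``bad'' resamplings is $\Theta(k)$, so with overwhelming probability some corrected coordinates land with errors in $(T_\ell - w_\ell, T_\ell + w_\ell)$ or even above $T_\ell + w_\ell$. This breaks three things simultaneously: $M_\ell = \emptyset$ (hence condition~(iii) of Lemma~\ref{lem:pabt-utility} is \emph{not} automatic), the monotonicity of $|H_\ell|$ within a stage that you use in Case~(B), and the claim in Case~(A) that every successful call permanently removes a coordinate from the high set.

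This is precisely the complication the paper flags (``some of the `correction' operations can also flip a `correct' coordinate to an `incorrect' one''), and the repair requires two ideas you are missing. First, replace the all-draws-good event by a per-stage Chernoff bound on the \emph{count} of bad draws: since each draw is bad with probability at most $(1/500)^{50} < 0.0009$, at most $0.001 m_\ell$ of the $m_\ell$ draws in stage $\ell$ are bad except with probability $2^{-\Omega(m_\ell)}$ --- an exponentially small failure probability, unlike the union bound. Second, you need a mechanism to control the middle set's size given that it is nonempty; bounding it only by accumulated bad draws across all earlier stages would give $\Omega(k)$, which overwhelms $|I_{\good}| \approx m_\ell$ in late stages. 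The paper's device is to strengthen the case split: in the dense case it assumes $|I^{\ell-1}_{\ell}| > 1.999 m_\ell$, so that the inductive hypothesis $|I^{\ell-1}_{\ell-1}| \leq 2m_{\ell-1}$ forces $|I^{\ell-1}_{\ell-1} \setminus I^{\ell-1}_{\ell}| \leq 0.23 m_\ell$; adding the $0.001 m_\ell$ current-stage bad draws bounds the bad set by $0.3 m_\ell$, which is at most half of the $0.999 m_\ell$ good coordinates, verifying condition~(iii). Your invariant, which tracks only the single count $N_\ell$ of coordinates above $T_\ell - w_\ell$, does not record the information needed to bound the middle band and would have to be reformulated along these lines.
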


Our utility analysis crucially relies on tracking the set of indices $i$ such that $|q_i - a_i|$ is above a certain threshold.
Specifically, for every $\ell \in [L], j \in [m_\ell]$ and $t \in \{0, \dots, L\}$, we define $I_{t}^{\ell, j}$ to be the set of indices $i = 1, \dots, k$ such that, after the $(\ell, j)$-th iteration, $|q_i - a_i| \geq T_t + w_t$. For notational convenience, we define $I_t^\ell := I_t^{\ell, m_\ell}$, $I_t^0 := [k]$, $I^{\ell, 0}_t := I^{\ell - 1}_t$, and
$\tau_\ell := T_\ell + w_\ell$.  

The high-level idea of the proof is to consider two cases, based  on whether the number of indices at the end of the $(\ell - 1)$-th iteration whose errors exceed $\tau_\ell$, i.e., $|I^{\ell - 1}_{\ell}|$, is small. Now, if this is already small (i.e., noticeably smaller than $m_\ell$), then we can use a concentration inequality to show that the number of additional indices that are ``flipped'' from below $\tau_\ell$ to above $\tau_\ell$ is small; from this, we can conclude that $|I^\ell_\ell|$ is small. On the other hand, if $|I^{\ell - 1}_{\ell}|$ is large, then we know that after the $(\ell -  1)$-th iteration the number of indices whose errors belong to $(\tau_{\ell - 1}, \tau_\ell)$ is small.  Roughly speaking, this allows us to apply Lemma~\ref{lem:pabt-utility}, which ensures that a significant fraction of selected coordinates indeed have errors at least $\tau_\ell$. This means that $|I^{\ell}_{\ell}|$ must be significantly smaller than $|I^{\ell - 1}_{\ell}|$, which ultimately gives us the desired bound in the second case.

We will show that with high probability $|I_{\ell}^\ell|$ is small, as formalized in our main lemma below.

\begin{lemma} \label{lem:one-stage-reduction}
Let $\ell \in [L]$.
Conditioned on $|I^{\ell - 1}_{\ell - 1}| \leq 2m_{\ell - 1}$, we have that
\begin{align*}
\Pr[|I^\ell_\ell| \leq 2m_\ell] \geq 1 - 2^{-\Omega(m_\ell)}. 
\end{align*}
\end{lemma}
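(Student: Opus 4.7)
The plan is to track the random walk $X_j := |I^{\ell,j}_\ell|$ over the $m_\ell$ inner iterations of stage $\ell$. Since each iteration resamples just one coordinate's answer, $|X_j - X_{j-1}| \leq 1$, so it suffices to control the number of up-steps $N^+$ (indices newly entering $\{i : |q_i - a_i| \geq \tau_\ell\}$) and down-steps $N^-$ (indices leaving it).

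First I would bound $N^+$. An up-step at iteration $(\ell, j)$ requires the fresh $\Lap(2/\eps_\ell)$ noise drawn in step~\ref{step:resampled} to have magnitude at least $\tau_\ell \geq 2 w_\ell$. By the choice $w_\ell = 100 \log(500 k/m_\ell)/\eps_\ell$, this happens with probability at most $p_{\mathrm{bad}} := (m_\ell / 500 k)^{100}$, which is bounded well away from $1/2$, so $N^+$ is stochastically dominated by $\Bin(m_\ell, p_{\mathrm{bad}})$ and a Chernoff bound gives $N^+ \leq 0.1 m_\ell$ with probability $1 - 2^{-\Omega(m_\ell)}$. Then I would case-split on $|I^{\ell-1}_\ell|$. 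If $|I^{\ell-1}_\ell| \leq 1.9 m_\ell$, the $N^+$ bound alone gives $X_{m_\ell} \leq 2 m_\ell$ and we are done.

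The remaining case $|I^{\ell-1}_\ell| > 1.9 m_\ell$ requires forcing enough down-steps via Lemma~\ref{lem:pabt-utility}, applied to each inner iteration with $\gamma = m_\ell/k = \kappa^\ell$ and parameters $T = T_\ell$, $w = w_\ell$. Conditions (i) and (ii) of that lemma are routine: (i) follows from $X_{j-1} > m_\ell$, and (ii) from plugging in the chosen $w_\ell$ (absorbing the factor of $0.5$ in the $\pabt$ privacy parameter into the constants). The main obstacle is verifying condition (iii), $|I_{\good}| \geq 2|I_{\bad}|$. A direct calculation from the definitions shows $T_\ell - w_\ell = \tau_{\ell-1} + 2 w_{\ell+1} > \tau_{\ell-1}$, so every $i \in I_{\bad}$ lies in $I^{\ell,j-1}_{\ell-1}$. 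Combining the inductive bound $|I^{\ell-1}_{\ell-1}| \leq 2 m_{\ell-1}$ with a second Chernoff flip-in bound at the level $\tau_{\ell-1}$ yields $|I^{\ell,j-1}_{\ell-1}| \leq 2 m_{\ell-1} + 0.1 m_\ell$ throughout stage $\ell$, and since $I_{\good}$ and $I_{\bad}$ are disjoint subsets of $I^{\ell,j-1}_{\ell-1}$, condition (iii) holds as long as $X_{j-1} = |I_{\good}| \geq \frac{2}{3}(2 m_{\ell-1} + 0.1 m_\ell) \approx 1.55\, m_\ell$.

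When the hypotheses of Lemma~\ref{lem:pabt-utility} hold, each inner iteration picks an index of $I_{\good}$ with probability at least $0.6$; the subsequent resample then falls below $\tau_\ell$ except with probability $p_{\mathrm{bad}}$, so the iteration contributes to $N^-$ with probability at least $0.5$. To finish the second case I split one more time: either $X$ drops below $1.6\, m_\ell$ at some iteration $j^*$, after which the $N^+$ bound caps its subsequent growth by $0.1 m_\ell$ giving $X_{m_\ell} < 1.8\, m_\ell$; or $X$ stays above $1.6\, m_\ell$ throughout stage $\ell$, in which case Lemma~\ref{lem:pabt-utility} applies at every step, Chernoff yields $N^- \geq 0.4 m_\ell$ with probability $1 - 2^{-\Omega(m_\ell)}$, and $X_{m_\ell} \leq 2 m_{\ell-1} + N^+ - N^- \leq 2 m_\ell/\kappa + 0.1 m_\ell - 0.4 m_\ell < 2 m_\ell$ using $\kappa = 0.9$. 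Taking a union bound over the handful of bad events completes the argument.
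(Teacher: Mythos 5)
Your proposal is correct and follows essentially the same route as the paper's proof: a Chernoff bound on the number of coordinates flipped above threshold by the resampling noise, a case split on whether $|I^{\ell-1}_\ell|$ is already small, and in the dense case a verification of the three hypotheses of Lemma~\ref{lem:pabt-utility} (controlling $I_{\bad}$ via the conditioning $|I^{\ell-1}_{\ell-1}| \leq 2m_{\ell-1}$ plus the flip-in bound at level $\tau_{\ell-1}$) followed by a Chernoff bound on the number of successful corrections. The only cosmetic difference is that you handle the possible mid-stage shrinkage of the good set with an extra stopping-time sub-case, where the paper uses the deterministic bound $|I^{\ell,j}_\ell| \geq |I^{\ell-1}_\ell| - j$; both work, and your constants all check out.
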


\begin{proof}
Consider the $\ell$-th (outer) iteration of the algorithm.
Let $Z_j$ denote the Laplace random variable drawn on Line~\ref{step:resampled} in the $j$-th inner iteration. Notice that from our setting of parameters $\eps_\ell, \tau_{\ell - 1}$, and $w_\ell$, we have
\begin{align*}
\Pr[|Z_j| \geq \tau_{\ell - 1}] \leq \Pr[|Z_j| \geq w_\ell] \leq 0.0009.
\end{align*}
From this and from the independence of the $Z_j$'s, we may apply the Chernoff bound, which implies that the following holds with probability at least $1 - 2^{-\Omega(m_\ell)}$:
\begin{align} \label{eq:resampling-error}
|\{j \in [m_\ell] \mid Z_j \geq \tau_{\ell - 1}\}| \leq 0.001m_\ell.
\end{align}
Thus, we may hence forth assume that~\eqref{eq:resampling-error} holds.

We will next consider two cases:
\begin{enumerate}
\item Case I: $|I^{\ell-1}_\ell| \leq 1.999 m_\ell$. From~\eqref{eq:resampling-error}, it follows that $|I^\ell_\ell| \leq |I^{\ell-1}_\ell| + 0.001m_\ell \leq 2m_\ell$ as desired.
\item Case II: $|I^{\ell-1}_\ell| > 1.999 m_\ell$. 

In this case, we would like to apply Lemma~\ref{lem:pabt-utility}. To do this, we will check that each of condition of Lemma~\ref{lem:pabt-utility} is satisfied with $T = T_\ell, w = w_\ell, \gamma = 0.9m_\ell/k$, and $\eps = \eps_\ell$.

We can bound the number of ``good'' indices $i$ such that $|q_i - a_i| \geq \tau_\ell$ by
\begin{align} \label{eq:lb-above-thresholds}
|I^{\ell, j}_\ell| \geq |I^{\ell -  1}_\ell| - j \geq 1.999m_\ell - m_\ell \geq 0.999m_\ell,
\end{align}
which is at least $\gamma \cdot k$ as desired.

The second condition of Lemma~\ref{lem:pabt-utility} holds simply by our choice of $w_\ell$.

Finally, let us bound the number of indices $i$ such that $a_i \in [\tau_{\ell - 1}, \tau_\ell)$ as follows.  Notice that
\begin{align*}
|I^{\ell - 1}_{\ell - 1} \setminus I^{\ell - 1}_{\ell}| \leq m_{\ell - 1} - 0.999m_{\ell} = (1/\kappa - 1.999)m_\ell \leq 0.23m_\ell,
\end{align*}
where the first inequality follows from our assumption on $I^{\ell - 1}_{\ell - 1}$.
We may now use~\eqref{eq:resampling-error} to conclude that, for any $j \in [m_\ell]$, we have
\begin{align} \label{eq:ub-between-thresholds}
|I_{\ell - 1}^{\ell, j} \setminus I_{\ell}^{\ell, j}| \leq 0.23m_\ell + 0.001m_\ell \leq 0.3m_\ell.
\end{align}
In other words, in the $\ell$-th outer loop, the number of indices $i$ with $a_i \in [\tau_{\ell - 1}, \tau_\ell)$ is always at most $0.3m_\ell$. Since $T_{\ell} = \tau_\ell - w_\ell \geq \tau_{\ell - 1} + w_\ell$, this also implies that the number of ``bad'' indices $i$ with $a_i \in [T_{\ell} - w_\ell, T_{\ell} + w_\ell)$ is at most $0.3m_\ell$. Together with~\eqref{eq:lb-above-thresholds}, this implies that the last condition of Lemma~\ref{lem:pabt-utility} holds.

Thus, we may apply Lemma~\ref{lem:pabt-utility}, which implies that $\Pr[a_{i^*} \geq \tau_\ell] \geq 0.5$ for each call on Line~\ref{step:select-coordinate}. Hence, by the Chernoff bound, we can conclude that, with probability $2^{-\Omega(m_\ell)}$, at least $0.4m_\ell$ of the $i^*$'s returned satisfy $a_{i^*} \geq \tau_\ell$. When this holds, it (together with~\eqref{eq:resampling-error}) implies that 
\begin{align*}
|I^\ell_\ell| 
&\leq |I^{\ell - 1}_{\ell}| - 0.4 m_{\ell} + 0.001m_\ell \\
&\leq |I^{\ell - 1}_{\ell - 1}| - 0.4m_\ell + 0.001m_\ell \\
&\leq 2m_{\ell - 1} - 0.4m_\ell + 0.001m_\ell \\
&\leq 2m_\ell.
\end{align*}
\end{enumerate}

Hence, in both cases, we have that with probability $2^{-\Omega(m_\ell)}$, it holds that $|I^\ell_\ell| \leq 2m_\ell$ as desired.
\end{proof}

Theorem~\ref{thm:utility} now follows easily from the above lemma.

\begin{proof}[Proof of Theorem~\ref{thm:utility}]
By applying Lemma~\ref{lem:one-stage-reduction} for each $\ell \in [L]$ and a union bound, we have that
\begin{align*}
\Pr[|I^L_L| \geq 2m_L] \leq 1 - \sum_{\ell \in [L]} 2^{-\Omega(m_\ell)} \leq 1 - 2^{-\Omega(k/(\log k)^{10})},
\end{align*}
where the latter follows from $m_1 \geq \cdots \geq m_L = \Theta(k/(\log k)^{10})$. Finally, notice that
\begin{align*}
\tau_L &\leq 4\left(\sum_{\ell \in [L + 1]} w_\ell\right) \\
&= 4\left(\sum_{\ell \in [L + 1]} \frac{100 \log(500 k / m_\ell)}{\eps_\ell}\right) \\
&= 4\frac{\sqrt{k}}{\eps_0} \left(\sum_{\ell \in [L + 1]} 100 \log(500 (1/\kappa)^\ell) \cdot \sqrt{\ell \lambda^\ell}\right) \\
&\leq O(\sqrt{k}/\eps_0) \cdot \left(\sum_{\ell \in [L + 1]} \ell^{3/2} \cdot \lambda^{\ell/2}\right) \\
&= O(\sqrt{k}/\eps_0) \\
&= O(\sbound).
\end{align*}
This means that $I^L_L = \{i \in [k] \mid |q_i - a_i| > O(\sbound)\}$ as desired.
\end{proof}

\section{Obtaining High Probability Error Bound}
\label{sec:high-prob-error}

In this section, we use our bound in the previous section together with Theorem~\ref{thm:sv-correction} to obtain the following high probability $\ell_\infty$ error guarantee:

\begin{theorem} \label{thm:high-prob-error}
For any $k \in \N, \eps \in (0, 1]$ and $\delta \in (0, 0.5]$, there exists an $(\eps, \delta)$-DP algorithm that can provide answers to $k$ queries such that, with probability $1 - O(1/k^{10})$, the $\ell_\infty$ error is $O(\sbound)$.
\end{theorem}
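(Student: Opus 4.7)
The plan is to cascade the iterative algorithm of Theorem~\ref{thm:iterative} with one final application of the sparse-vector correction procedure from Theorem~\ref{thm:sv-correction}. Concretely, I would first run the $(\eps/2, \delta/2)$-DP iterative algorithm from Theorem~\ref{thm:iterative} to produce initial answers $a_1, \dots, a_k$. That theorem guarantees that, with probability at least $1 - 2^{-\Omega(k/(\log k)^{10})}$, the number of indices $i$ with $|q_i - a_i| > \alpha_0$ is at most $c_{\SV} := O(k/(\log k)^{10})$, where $\alpha_0 = O(\sbound)$ is the bound from that theorem.

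Next, I would treat $(a_1, \dots, a_k)$ as fixed by post-processing and define residual queries $g_i(X) := q_i(X) - a_i$, each of sensitivity one. I would then invoke Theorem~\ref{thm:sv-correction} on $g_1, \dots, g_k$ with $\eps_{\SV} = \eps/2$, $\delta_{\SV} = \delta/2$, $\beta_{\SV} = 1/k^{10}$, the value of $c_{\SV}$ fixed above, and $\alpha_{\SV} = \Theta(\sbound)$ chosen large enough that $\alpha_{\SV}/2 \ge \alpha_0$. The parameter check that the algorithm exists is that $\alpha_{\SV} \ge \Omega(\bound{c_{\SV}}{\eps_{\SV}}{\delta_{\SV}} \cdot \log(k/\beta_{\SV}))$; substituting gives
\[
\bound{c_{\SV}}{\eps_{\SV}}{\delta_{\SV}} \cdot \log(k/\beta_{\SV}) = O\!\left(\tfrac{1}{\eps}\sqrt{c_{\SV} \log(1/\delta)}\right) \cdot O(\log k) = O(\sbound),
\]
since $c_{\SV} \cdot (\log k)^2 = O(k/(\log k)^{8}) \le k$. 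The hypothesis of Theorem~\ref{thm:sv-correction} that at most $c_{\SV}$ residuals exceed $\alpha_{\SV}/2$ is exactly the guarantee from the first step, so with probability at least $1 - 1/k^{10}$ we obtain corrections $\hat g_1, \dots, \hat g_k$ with $|\hat g_i - g_i(X)| \le \alpha_{\SV} = O(\sbound)$ for every $i$. The algorithm then outputs $a_i + \hat g_i$, whose error from $q_i(X)$ is exactly $|\hat g_i - g_i(X)|$.

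Privacy follows from basic composition (Theorem~\ref{thm:basic-composition}) of two $(\eps/2, \delta/2)$-DP procedures, yielding overall $(\eps, \delta)$-DP. The failure probability is at most $2^{-\Omega(k/(\log k)^{10})} + 1/k^{10} = O(1/k^{10})$, matching the statement. The only substantive obstacle is the parameter calibration in the second step: one has to simultaneously ensure $\alpha_{\SV}$ dominates the threshold from Theorem~\ref{thm:iterative} (so the ``bad set'' of Theorem~\ref{thm:iterative} is contained in the bad set of Theorem~\ref{thm:sv-correction}) while still keeping $\alpha_{\SV} = O(\sbound)$ in the face of the extra $\log(k/\beta_{\SV})$ factor. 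The $(\log k)^{10}$ slack built into Theorem~\ref{thm:iterative} is exactly what makes this calibration comfortable: spending $(\log k)^2$ in this final step still leaves $(\log k)^8$ of headroom.
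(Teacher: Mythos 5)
Your proposal is correct and is essentially identical to the paper's own proof: the same cascade of the $(\eps/2,\delta/2)$-DP algorithm from Theorem~\ref{thm:iterative} followed by Theorem~\ref{thm:sv-correction} with $\beta_{\SV}=1/k^{10}$, $c_{\SV}=O(k/(\log k)^{10})$, and $\alpha_{\SV}=\Theta(\sbound)$. Your explicit verification that $\bound{c_\SV}{\eps_\SV}{\delta_\SV}\cdot\log(k/\beta_\SV)=O(\sbound)$ is the calculation the paper leaves as ``simple to verify.''
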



\begin{proof}
First, we apply the $(\eps/2, \delta/2)$-DP algorithm from Theorem~\ref{thm:iterative} to get the answers $a_1, \dots, a_k$ to the queries. From Theorem~\ref{thm:iterative}, there exist constants $C_1, C_2 > 0$ such that, with probability $2^{-\Omega(k / (\log k)^{10})}$, we have that
\begin{align*}
|\{i \in [k] \mid |q_i - a_i| > C_1 \cdot \sbound \}| \leq C_2 \cdot k / (\log k)^{10}.
\end{align*}
We then apply Theorem~\ref{thm:sv-correction} with $\eps_{\SV} = \eps/2, \delta_{\SV} = \delta/2, \beta_{\SV} = 1 / k^{10}, c_{\SV} = C_2 \cdot k / (\log k)^{10}, g_i = q_i - a_i$, and let $\alpha_{\SV} = 2C_1 \cdot \sbound$. Let $b_1, \dots, b_k$ be its output. Our algorithm then outputs $a_1 + b_1, \dots, a_k + b_k$.

It is simple to verify that the condition on $\alpha_{\SV}$ in Theorem~\ref{thm:sv-correction} holds for any sufficiently large $k$. As a result, Theorem~\ref{thm:sv-correction} implies that, with probability $1 - O(1/k^{10})$, the $\ell_\infty$ error is at most $O(\sbound)$.
\end{proof}

\section{From High Probability to Expected Error Bound}
\label{sec:expected-error}

Finally, we will transform our high probability error bound into an expected error bound (Theorem~\ref{thm:main}). To do so, we need the following well-known theorem, which follows from the Gaussian mechanism (see e.g.~\cite{DworkR14,SteinkeU16}).

\begin{theorem}[Gaussian Mechanism] \label{thm:gaussian}
For any $k \in \N, \eps_g \in (0, 1]$ and $\delta_g \in (0, 0.5]$, there exists an $(\eps_g, \delta_g)$-DP algorithm that can answer $k$ queries such that, for any $t > 0$, the $\ell_\infty$ error is at least $O(t \cdot \bound{k}{\eps_g}{\delta_g})$ with probability at most $k \cdot e^{-\Omega(t^2)}$.
\end{theorem}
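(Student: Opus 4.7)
The plan is to exhibit the standard Gaussian mechanism and verify its tail behavior. Concretely, set $\sigma = c \cdot \bound{k}{\eps_g}{\delta_g} = c \cdot \frac{1}{\eps_g}\sqrt{k \log(1/\delta_g)}$ for a sufficiently large absolute constant $c > 0$, draw independent $\nu_1, \dots, \nu_k \sim \mathcal{N}(0, \sigma^2)$, and on input $X$ output the noisy answers $\tilde a_i = q_i(X) + \nu_i$ for each $i \in [k]$.

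For privacy, I would treat the mapping $X \mapsto (q_1(X), \dots, q_k(X))$ as a single vector-valued query. Since each $q_i$ has sensitivity at most one, the $\ell_2$-sensitivity of this combined query is at most $\sqrt{k}$. The classical (analytic) Gaussian mechanism for vector queries with bounded $\ell_2$-sensitivity then guarantees that adding independent $\mathcal{N}(0, \sigma^2)$ noise to each coordinate is $(\eps_g, \delta_g)$-DP whenever $\sigma \gtrsim \sqrt{k \log(1/\delta_g)}/\eps_g$, which is exactly our choice of $\sigma$ for large enough $c$; this is the same black-box used in~\cite{DworkR14,SteinkeU16}.

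For utility, fix $t > 0$. Each $\nu_i$ is a mean-zero Gaussian with variance $\sigma^2$, so the standard sub-Gaussian tail bound yields
\[
\Pr[|\nu_i| \geq t \sigma] \leq 2 e^{-t^2/2}.
\]
A union bound over $i \in [k]$ then gives
\[
\Pr\!\left[\max_{i \in [k]} |\tilde a_i - q_i(X)| \geq t\sigma\right] \leq 2k \, e^{-t^2/2} = k \cdot e^{-\Omega(t^2)}.
\]
Since $t\sigma = \Theta(t \cdot \bound{k}{\eps_g}{\delta_g})$, this is precisely the advertised tail bound on the $\ell_\infty$ error.

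There is essentially no technical obstacle here: both ingredients (the analytic Gaussian mechanism for $\ell_2$-bounded vector queries, and the one-sided Gaussian tail bound $\Pr[|\mathcal{N}(0,1)| \geq t] \leq 2 e^{-t^2/2}$) are classical. The only point requiring a little care is choosing the constant $c$ in the definition of $\sigma$ uniformly over the full parameter range $\eps_g \in (0,1]$, $\delta_g \in (0, 0.5]$, so that the hypothesis of the analytic Gaussian mechanism is met throughout; but absorbing any additive $O(1)$ slack into $c$ handles this routinely.
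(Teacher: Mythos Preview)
Your proposal is correct and matches the paper's treatment: the paper does not give an explicit proof of this theorem but simply cites it as a well-known consequence of the Gaussian mechanism~\cite{DworkR14,SteinkeU16}, and the standard argument you outline (calibrating $\sigma$ to the $\ell_2$-sensitivity $\sqrt{k}$ for privacy, then union-bounding the Gaussian tails for the $\ell_\infty$ error) is exactly what those references provide.
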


\begin{proof}[Proof of Theorem~\ref{thm:main}]
The entire algorithm is as follows:
\begin{itemize}
\item First, we apply $(\eps/3, \delta/3)$-DP algorithm from Theorem~\ref{thm:high-prob-error} to get the answers $a_1, \dots, a_k$ to the input queries.
\item Secondly, we apply $(\eps/3, \delta/3)$-DP algorithm from Theorem~\ref{thm:high-prob-error} to get the answers $b_1, \dots, b_k$ to the input queries.
\item Then, we apply $(\eps/3, \delta/3)$-DP algorithm from Theorem~\ref{thm:gaussian} to the queries $|q_1 - a_1|, \dots, |q_k - a_k|$ to get the answers $c_1, \dots, c_k$
\item If $\max\{c_1, \dots, c_k\} \leq k^{10} \cdot \sbound$, then we output $(a_1, \dots, a_k)$.
\item Otherwise, if $\max\{c_1, \dots, c_k\} > k^{10} \cdot \sbound$, then we output $(b_1, \dots, b_k)$.
\end{itemize}
By basic composition (Theorem~\ref{thm:basic-composition}), the entire algorithm is $(\eps, \delta)$-DP as desired.

We next analyze the expected $\ell_\infty$ error of the algorithm. 
To do this, notice first that, regardless of $a_1, \dots, a_k$, the $\ell_\infty$ error of the output is at most the sum of $k^{10} \cdot \sbound$ and the two $\ell_\infty$ errors of the two runs of the Gaussian mechanism. As such, we still have that the $\ell_\infty$ error of the entire algorithm is at least $k^{10} \cdot \sbound + O(t \cdot \sbound)$ with probability at most $k \cdot e^{-\Omega(t^2)}$.

Furthermore, the guarantee from Theorem~\ref{thm:high-prob-error} ensures that the $\ell_\infty$ error of the answers $a_1, \dots, a_k$ is at most $O(\sbound)$ with probability $1 - O(1/k^{10})$. When this event holds, we may apply the tail bound from Theorem~\ref{thm:gaussian}, which implies that the output of the entire algorithm has error $h := O(\sbound)$ with probability $1 - O(1/k^{10})$.

Let $\upsilon$ denote the $\ell_\infty$ error of the entire algorithm. Combining the bounds from the previous two paragaphs, we get
\begin{align*}
\E[\upsilon] &= \int_0^\infty \Pr[\upsilon > x] dx \\
&= \int_0^h \Pr[\upsilon > x] dx + \int_h^{2k^{10} \cdot \sbound} \Pr[\upsilon > x] dx + \int_{2k^{10} \cdot \sbound}^{\infty} \Pr[\upsilon > x] dx \\
&\leq h + O(1/k^{10}) \cdot (2k^{10} \cdot \sbound) + O(\sbound) \cdot \int_{k^{10}}^{\infty} e^{-\Omega(t^2)} dt \\
&\leq O(\sbound),
\end{align*}
which concludes our proof.
\end{proof}

\section{Conclusions and Open Questions}
\label{sec:open}

In this work, we give an $(\eps, \delta)$-DP algorithm that can answer $k$ queries, each of sensitivity one, with $\ell_{\infty}$ error $O(\sbound)$. This resolves the question posed by Steinke and Ullman~\cite{SteinkeU16}.

An immediate open question is if one can get the best of both our work and~\cite{DK20}. Namely, to devise an $(\eps, \delta)$-DP algorithm for answering $k$ queries (of sensitivity one) such that the error is always (i.e., with probability one) $O(\sbound)$ for any value of $\delta > 0$.


\section*{Acknowledgments}

We are grateful to Thomas Steinke for introducing us to the problem and for explaining to us useful insights from previous works, especially from~\cite{SteinkeU16}.

\bibliographystyle{alpha}
\bibliography{refs}

\end{document}